\documentclass[conference]{IEEEtran}
\IEEEoverridecommandlockouts

\usepackage{cite}

\usepackage{amsmath,amssymb,amsfonts}
\usepackage{amsthm}
\usepackage{algorithmic}
\usepackage{graphicx}
\usepackage{textcomp}
\usepackage[dvipsnames]{xcolor}
\usepackage{tcolorbox}
\tcbuselibrary{skins,breakable}
\usepackage{tikz}
\usetikzlibrary{positioning,arrows.meta,calc,fit,backgrounds,shapes.geometric,shapes.misc,decorations.pathreplacing}
\usepackage{caption}
\usepackage{booktabs}
\usepackage{array}
\usepackage{colortbl}
\usepackage{url}
\usepackage{hyperref}
\usepackage{xurl}

\newtheorem{theorem}{Theorem}
\newtheorem{lemma}{Lemma}
\newtheorem{corollary}{Corollary}
\newtheorem{proposition}{Proposition}
\newtheorem{definition}{Definition}
\newtheorem{functionality}{Functionality}

\definecolor{cDefn}{HTML}{882255}
\definecolor{cThm}{HTML}{CC6677}
\definecolor{cLem}{HTML}{44AA99}
\definecolor{cCor}{HTML}{AA4499}
\definecolor{cProp}{HTML}{DDCC77}
\definecolor{cFunc}{HTML}{117733}
\definecolor{cProto}{HTML}{BBBBBB}
\definecolor{cLink}{HTML}{1A5FB4}

\hypersetup{colorlinks=true, urlcolor=cLink, linkcolor=black,
  citecolor=black, breaklinks=true}

\tcolorboxenvironment{definition}{enhanced,
  colback=cDefn!8, colframe=cDefn!75, boxrule=0.3pt, arc=2pt,
  boxsep=2pt, left=4pt, right=4pt, top=2pt, bottom=2pt}
\tcolorboxenvironment{functionality}{enhanced,
  colback=cFunc!8, colframe=cFunc!75, boxrule=0.3pt, arc=2pt,
  boxsep=2pt, left=4pt, right=4pt, top=2pt, bottom=2pt}
\tcolorboxenvironment{theorem}{enhanced,
  colback=cThm!8, colframe=cThm!75, boxrule=0.3pt, arc=2pt,
  boxsep=2pt, left=4pt, right=4pt, top=2pt, bottom=2pt}
\tcolorboxenvironment{lemma}{enhanced,
  colback=cLem!8, colframe=cLem!75, boxrule=0.3pt, arc=2pt,
  boxsep=2pt, left=4pt, right=4pt, top=2pt, bottom=2pt}
\tcolorboxenvironment{corollary}{enhanced,
  colback=cCor!8, colframe=cCor!75, boxrule=0.3pt, arc=2pt,
  boxsep=2pt, left=4pt, right=4pt, top=2pt, bottom=2pt}
\tcolorboxenvironment{proposition}{enhanced,
  colback=cProp!8, colframe=cProp!75, boxrule=0.3pt, arc=2pt,
  boxsep=2pt, left=4pt, right=4pt, top=2pt, bottom=2pt}

\def\BibTeX{{\rm B\kern-.05em{\sc i\kern-.025em b}\kern-.08em
    T\kern-.1667em\lower.7ex\hbox{E}\kern-.125emX}}

\providecommand{\getsr}{\ensuremath{\overset{\$}{\leftarrow}}}

\providecommand{\Real}{\ensuremath{\mathsf{Real}}}
\providecommand{\Ideal}{\ensuremath{\mathsf{Ideal}}}
\providecommand{\View}{\ensuremath{\mathsf{view}}}
\providecommand{\indist}{\ensuremath{\stackrel{c}{\approx}}}

\begin{document}

\title{Reliable Homomorphic Matching for Fuzzy Labeled PSI at Scale}

\author{\IEEEauthorblockN{Erkam Uzun$^{*}$}
\thanks{$^{*}$This work extends research initiated during the author's doctoral studies at the Georgia Institute of Technology. All contributions are original and were developed independently using the author's personal time and resources, without affiliation to any current or former employer.}}

\maketitle
\thispagestyle{plain}
\pagestyle{plain}

\begin{abstract}

Fuzzy Labeled Private Set Intersection (FLPSI) lets a receiver learn the labels of enrolled records that are similar to its query, and nothing else. FLPSI can be built in several ways. Constructions based on a set-threshold reduction reach practical performance: a query matches a record when the two agree on a threshold number of components. These constructions delegate the private matching to an inner set-threshold kernel. We study its homomorphic form, which combines leveled-BFV homomorphic encryption (HE), a garbled circuit, and secret sharing to decide the match under encryption and release the record's label. We identify a composition gap in this kernel, an instance of a protocol-level problem: efficiency is bought with a per-trial false-accept probability, but one query runs a trial for every record, so the error compounds with the database size into the kernel's realization soundness error (RSE), the rate at which it accepts a query the plaintext matcher would reject. The RSE is a reliability property of the cryptographic matching layer, not the matcher's accuracy. On a spurious accept the kernel also returns a value the plaintext matcher would never produce. A sound kernel must contribute zero or negligible RSE of its own. We formalize this requirement as a composable security property, give a closed-form bound on the receiver's advantage, and close the gap with CSTPSI, a kernel that runs independent token rounds and raises the per-trial bound to a matching power. We prove CSTPSI secure in the semi-honest model. The closed-form bound sets the round count: two token rounds suffice for million-scale databases and three for billion-scale at the $10^{-6}$ engineering threshold. Our evaluation confirms the prediction. At a million records the baseline kernel's RSE reaches $100\%$ while CSTPSI holds it at $0$ in every measured configuration. CSTPSI decouples threshold-checking and upload costs from label size. For large labels at small to moderate scale it is more than $20\times$ faster than the baseline kernel, with up to 93\% less communication. It converges to the baseline only at the million-scale database size. Our implementation, with a one-command reproducibility harness, is publicly available.

\end{abstract}

\begin{IEEEkeywords}
private set intersection, fuzzy labeled PSI, garbled circuits, leveled homomorphic encryption, secret sharing
\end{IEEEkeywords}

\section{Introduction}\label{sec:intro}

Private set intersection (PSI) lets two parties, a sender and a receiver, compute the intersection of their private sets without revealing anything else. A long line of work~\cite{fnp2004,psz2014,kkrt2016,prty2020,rs2021,rr2022} has made exact-equality PSI fast enough for wide deployment, for example in contact discovery and compromised-credential checking. Many applications, however, need similarity instead of exact equality: biometric templates never repeat exactly across captures, and a password is risky even when it is only close to a leaked one. \emph{Fuzzy Labeled PSI} (FLPSI) serves this regime. The sender holds a database of enrolled records, each carrying a label; the receiver holds a query and learns the labels of the records that are similar to it, and nothing else, while the sender learns nothing at all.

This paper is about a gap that opens when the matching inside FLPSI is made fast enough for practice. Secure realizations of a similarity matcher rarely evaluate the closeness predicate exactly; trading exactness for efficiency at scale is the norm across secure similarity search, in PSI and beyond~\cite{uzun2021fuzzy,cfr2023,chen2020sanns}. In FLPSI the trade takes a specific form: the predicate is replaced by a cryptographic test that is exact on a true match but wrongly accepts an unrelated pair with a small per-trial probability. The bound looks harmless until one notices that a single online query is not one test but many, run across the whole database. The per-trial probability then compounds into a per-query rate that grows with the database, and at deployment scale the kernel accepts a query that matches no record. This is a soundness defect of the realization: it accepts inputs the plaintext matcher would reject, and we call the rate at which it does so the \emph{realization soundness error} (RSE). The RSE is a reliability property of the realization, not an accuracy metric, so it cannot be weighed against the matcher's intrinsic false-match rate (FMR), a plaintext floor that any realization inherits and that is out of scope here. The problem is protocol-level: it appears in any realization that makes this trade. We give the model and the attack in \S\ref{sec:threat}.

We study this RSE in the setting where it can be isolated and measured: the set-threshold reduction~\cite{uzun2021fuzzy,cfr2023}, one of the several ways to build FLPSI. It encodes each record and each query into a tuple of $N$ component items so that similar inputs agree on many components, declares a match when at least $k$ of the $N$ positions agree, and runs the private matching in a discrete inner kernel. That last property is what matters here: with the matching confined to a kernel, the kernel's RSE separates from the matcher's FMR. Within this reduction, our focus is the kernel's homomorphic form, which we call a \emph{leveled-HE-based set-threshold kernel}: a composition of leveled-BFV homomorphic encryption (HE), a garbled-circuit (GC) oblivious pseudorandom function (OPRF), and Shamir secret-share reconstruction.\footnote{The formal sections (\S\ref{sec:cstpsi-def} onward) treat the kernel as a two-party protocol and keep the standard phrasing ``protocol $\Pi_T$''.} We choose this form because one of the most practical FLPSI constructions is built on it and shows the gap most sharply. Our reference example is \mbox{STLPSI}, the inner sub-protocol of the FLPSI'21 protocol~\cite{uzun2021fuzzy}; we refer to it as the baseline kernel, or the baseline protocol, in the rest of the paper. One caveat guides the whole study. Secure-computation remedies are tied to the schema of the construction they harden, so a single fix cannot cover every realization and should not try. We therefore formalize the gap once, in the set-threshold setting, and close it on this representative kernel with a new construction, \mbox{CSTPSI}. The result is meant as a reusable template rather than a one-off patch.

The trial count is concrete in this kernel. For each query it checks every $k$-subset of the $N$ component positions in each of the $n_{\mathit{part}}$ database partitions, $\binom{N}{k}\,n_{\mathit{part}}$ independent trials in all, with $n_{\mathit{part}}$ growing linearly in the database size $D$. Each trial is exact on a true match but wrongly accepts an unrelated pair with probability $1/F$, a coincidental collision in the Shamir field of size $F$. The per-query RSE aggregates these trials, so it rises with $D$ toward one.

What the kernel returns on a spurious accept confirms the defect is one of soundness. The released value is \emph{off-curve}. It is a field element outside the label range $[0,D)$, independent of every record, so it is not a stored label and the accept leaks nothing (\S\ref{sec:far-problem} explains why). The same kernel also runs without labels, where it returns only a match bit. There a spurious accept is simply a wrong answer, even under a perfect matcher, with no off-curve value to mark it. Either way a sound kernel must have zero or negligible RSE of its own. Prior constructions either inherit this gap silently or re-tune parameters around it per deployment; none gives a composable treatment, which we survey in \S\ref{sec:background}.

We give that composable treatment in the two-party semi-honest model, with five contributions:
\begin{itemize}
  \item \textbf{A composition soundness gap, identified and modeled.} We model the set-threshold kernel's per-trial spurious accept as a deployment-scale soundness defect, and cast it as a concrete attack by a semi-honest receiver (\S\ref{sec:threat}).
  \item \textbf{A composable RSE definition and bound.} We define security for the set-threshold setting so that the kernel's added probability of accepting an unmatched query is the quantity of interest. The definition yields a closed-form bound (Lemma~\ref{lem:far-bound}) that makes the workload composition explicit, and it recovers the per-element analysis of exact-match labeled PSI as a special case.
  \item \textbf{CSTPSI: a composable validator construction, proved secure.} \mbox{CSTPSI} runs $T$ independent token rounds, each an independent validator of a candidate match, so each trial's success probability falls from $1/F$ to $1/F^{T}$ and the per-query bound becomes $\binom{N}{k}\,n_{\mathit{part}}/F^{T}$ (Theorem~\ref{thm:multi-token}). We prove it secure in the semi-honest model through a simulation-based reduction (\S\ref{sec:security}).
  \item \textbf{Protocol engineering achieves soundness restoration for free.} Amplification adds token rounds, and a naive realization would pay for each with a fresh garbled circuit and encrypted-query upload, so cost would grow with the round count and the label size. \mbox{CSTPSI} instead introduces several improvements without changing the ideal functionality or the simulation proof. The receiver (i) evaluates the garbled circuit once per session, (ii) sends the query once, and (iii) caches the modular inverses that reconstruction reuses, while the sender caches the expanded encrypted query-power window. These optimizations reduce the per-round work to plaintext-ciphertext inner products and decouple the threshold-check and upload costs from the label size (\S\ref{sec:building-blocks}).
  \item \textbf{Implementation and artifact.} A complete implementation with a one-command harness that regenerates every table and figure in this paper, publicly available.\footnote{\url{https://github.com/euzun/cstpsi}}
\end{itemize}

Our measurements confirm both the gap and the fix. We first use synthetic data that realizes a perfect matcher with zero FMR, so every accept is purely the kernel's RSE and its scaling with the database is clear: at the FLPSI'21 parameters the baseline kernel's RSE climbs from under $1\%$ at $1$K and $4.5\%$ at $10$K records to $56\%$ at $100$K and $100\%$ at $1$M, while \mbox{CSTPSI} at $T = 2$ admits no spurious accept in any measured configuration through $1$M records, below the $10^{-6}$ ($\approx\!2^{-20}$) engineering threshold; the bound shows $T = 3$ suffices at billion scale. We then run the attack on two real datasets of different scales, LFW and Deep1B, where the baseline's RSE climbs to near one at a million records ($0.999$ on Deep1B) while \mbox{CSTPSI} holds it at zero (\S\ref{sec:eval-real}). The added security costs little. With its cached garbled circuit and encrypted query, \mbox{CSTPSI} matches or outperforms the baseline across the tested range. It is more than $20\times$ faster for large labels on small-to-moderate databases, converges to within about $2\%$ of the baseline at a million records, and sends up to $93\%$ less data in the regimes where it leads.

\section{Threat Model and Assumptions}\label{sec:threat}

We state here the system this paper studies and the powers we grant the
adversary. We then draw the line the rest of the paper rests on: between
the matcher's own error floor, which the protocol cannot and should not
change, and the error the realization makes, which the protocol can and
must close. Finally, we say what counts as an attack.

\subsection{System and Adversary Model}\label{sec:threat-model}

The system is an FLPSI deployment built on the set-threshold reduction.
A \emph{sender} holds a database of enrolled records, each carrying a
label. A \emph{receiver} holds a query and wishes to learn the labels of
the records that are similar to it, and nothing else. The matching is
delegated to an inner \emph{leveled-HE-based set-threshold kernel}, whose job is
to realize the plaintext $k$-of-$N$ matcher under
encryption: a query matches a record when the two agree on at least $k$
of their $N$ components, and on a match the kernel releases the record's
label. The receiver submits one query per online interaction. In scope
is what the kernel does on each submitted query: which queries it
accepts and which values it returns. Out of scope is the choice of
representation that turns raw inputs into the $N$ components.

We adopt the standard semi-honest two-party model.
\begin{itemize}
\item Both parties are probabilistic polynomial-time
\emph{semi-honest} adversaries with security parameter
$\lambda$: each follows the protocol specification but may
attempt to learn additional information from its view of the
transcript.

\item We analyse a single execution in isolation; guarantees
under multiple concurrent executions are out of scope.

\item We rely on the standard assumptions of the building blocks
introduced in \S\ref{sec:building-blocks}: the IND-CPA security
of \mbox{BFV} leveled homomorphic encryption~\cite{seal}, the
pseudorandomness of \mbox{AES-128}~\cite{wang2017emp,aby}, and
the information-theoretic threshold property of Shamir secret
sharing~\cite{shamir1979share}.

\item The sender draws a fresh \mbox{OPRF} key per execution;
the receiver generates the HE keys and
shares only the public key with the sender.  No trusted third
party is involved.
\end{itemize}

The model makes no assumptions about the origin or domain of database
items. Rows may derive from biometric templates, attribute arrays,
medical records, or any structured input amenable to a $k$-of-$N$
component decomposition.

\subsection{Two Error Sources}\label{sec:threat-errors}

This paper separates two error sources with different owners, both
introduced in \S\ref{sec:intro}. The \emph{false-match rate} (FMR)
belongs to the matcher. It is a plaintext floor that we inherit and
leave out of scope. The \emph{realization soundness error} (RSE)
belongs to the kernel. It is the rate at which the kernel accepts a
query the matcher would reject, and it is the quantity we target. The
two are not interchangeable.

Why the RSE matters is easy to undervalue, so we state it plainly. The
kernel's one job is to realize the matcher faithfully, so that its
accept-or-reject decision can be trusted as the matcher's own. A
spurious accept breaks that trust. It cannot be weighed against the
FMR, because a large FMR is the matcher honestly reporting
chance-similar records, while a spurious accept is the kernel reporting
a match that was never there. Filtering the
values afterward does not help, because a fixed fraction of them lands
in the label range by chance and passes a range check silently. The
stakes are starkest without labels. \mbox{CSTPSI} also runs as
unlabeled fuzzy \mbox{PSI}, where the kernel returns only a match bit,
so a spurious accept is a wrong answer outright, even under a perfect
matcher, with nothing to inspect or filter. A
deployable kernel must therefore contribute zero or negligible RSE of
its own. The attack below shows why the requirement is strict, because
at scale the untreated error swamps the match decision.

\subsection{The Composition Attack on FLPSI}\label{sec:threat-attack}

We now turn this soundness gap into a
concrete attack: a semi-honest receiver, issuing only ordinary queries,
drives the kernel's realization soundness error from negligible to
near-certain as the database grows. We give first the adversary and procedure, and then what a spurious accept yields.

\paragraph{Assets and adversary.}
The asset is the soundness of the match decision: every accept should
mark a record the plaintext matcher would also accept, and every
released value should be a genuine label of such a record. The adversary
is a semi-honest receiver that submits queries through the normal
interface. It need not deviate from the protocol, and it need not hold
any input similar to an enrolled record. It only has to issue ordinary
queries and read what the kernel returns.

\paragraph{The procedure.}
The attack is nothing but the trial structure of \S\ref{sec:intro}
carried to scale. The receiver issues ordinary queries; each drives
$\binom{N}{k}\,n_{\mathit{part}}$ independent trials, and aggregated
over a database of size $D$ the per-query realization soundness error
climbs toward one. This is a multiple-comparisons effect: a per-trial
rate harmless in isolation turns near-certain across the simultaneous
tests of one query. After passing a dtabase size, the kernel accepts
essentially every query, even with no similarity to any
record.

\paragraph{Why this is a protocol-level problem.}
The compounding follows from the workload, not from any flaw unique to
one scheme. Any realization that buys efficiency with a per-trial
spurious accept, and that runs a trial for every record, carries the
same gap: the per-query rate grows with the database and the match
decision eventually fails. The set-threshold kernel is where the gap
can be isolated and measured, since the RSE there separates
cleanly from the matcher's FMR, but the shape of the problem is
general. We formalize the per-query rate as a closed-form bound
(Lemma~\ref{lem:far-bound}) and measure the break on real data, where
the baseline kernel's RSE reaches near one at a million records while
our construction holds it at zero (\S\ref{sec:eval-real}). The rest of
the paper closes this gap inside the kernel and sizes the fix to the
database scale.

\section{Related Work}\label{sec:background}

\begin{table*}[t]
  \centering
  \caption{Audit-thread positioning. ``Thr.'': $k$-of-$N$ matching;
    ``Lbl.'': label retrieval on match; ``RSE'': analyzes spurious-accept
    scaling for its threshold or label primitive.}
  \label{tab:audit-thread}
  \footnotesize
  \setlength{\tabcolsep}{3pt}
  \begin{tabular}{lllllll}
  \toprule
    Protocol & Venue / Year & Mechanism & Thr. & Lbl. & RSE & Role in audit thread \\
    \midrule
    Resende and Aranha~\cite{rdfa2018}              & FC'18        & cuckoo-filter tags     & no  & no  & per-element only     & flagged-by-successors \\
    Kales et al.~\cite{krssw2019}                   & USENIX'19    & cuckoo-filter tags     & no  & no  & per-element only     & deployed-with-bound \\
    Cong et al.~\cite{cong2021labeled}              & CCS'21       & HE polynomial eval     & no  & yes & per-execution        & recomposes deployed bound \\
    Garimella et al.~\cite{Garimella:CRYPTO:2021}   & CRYPTO'21    & OKVS amplification     & no  & no  & data-structure level & algebraic precedent ($p \mapsto p^{c}$) \\
    Uzun et al.~\cite{uzun2021fuzzy}                & USENIX'21    & HE + GC, single token  & yes & yes & no                   & no scale RSE analysis (baseline we extend) \\
    Chakraborti et al.~\cite{cfr2023}               & USENIX'23    & subsample reconciliation & yes & no & no               & no scale RSE analysis \\
    van Baarsen and Pu~\cite{baarsenpu2024fuzzyhyperballs} & EUROCRYPT'24 & DDH spatial hashing & no & no & no               & flags leakage in prior distance-aware FPSI \\
    \midrule
    CSTPSI (this work)                              & -            & HE + GC, multi-token   & yes & yes & yes (Lem.~\ref{lem:far-bound}, Thm~\ref{thm:multi-token}) & general formalization + constructive fix \\
    \bottomrule
  \end{tabular}
  \vspace{-1.5em}
\end{table*}

This section surveys prior work on private set intersection
(\mbox{PSI}) along the two axes most relevant to our construction,
\emph{fuzziness} of the matching predicate and \emph{labeling} of
the sender's records, and positions \mbox{CSTPSI} relative to
\mbox{STLPSI}, the closest prior threshold-labeled construction and
our reference baseline.  Exact-equality
\mbox{PSI}~\cite{fnp2004,psz2014,kkrt2016,prty2020,rs2021,rr2022} is
deployed at consumer scale but does not apply directly to similarity
matching, so we leave it aside and focus on the fuzzy line.

The fuzzy \mbox{PSI} (\mbox{FPSI}) literature splits along two
reductions of the closeness predicate.  The first is the
set-threshold ($k$-of-$N$) reduction of our
setting~\cite{chmielewski2008fuzzy, ye2009efficient, calapodescu2017compact, uzun2021fuzzy,cfr2023}.  The second computes the
predicate directly as a metric inequality $d_p(x, y) \le \tau$ for
an $\ell_p$ distance ($p \in \{1, 2, \infty\}$) or a Hamming
threshold~\cite{sadeghi2009efficient, erkin2009privacy,osadchy2010scifi, barni2010privacy,blanton2011secure, GRS22, gqllw2024fmap, baarsenpu2024fuzzyhyperballs,pstkz2025darot,zccbllww2025, yhwdwz2026fpsi, mwzl2026}.
Garimella et al.~(CRYPTO'22)~\cite{GRS22} suppress the realization
soundness error (RSE)
at the batched-function-secret-sharing level by repeating $\ell$
independent \mbox{bFSS} sharings ($\ell{=}280$ as instantiated).  Gao et al.~\cite{gqllw2024fmap}
achieve strictly linear communication via a ``fuzzy mapping''
assumption, with a reported $30$--$305\times$ speedup over the prior
\mbox{EUROCRYPT} 2024 baseline~\cite{baarsenpu2024fuzzyhyperballs}.  Yang et
al.~\cite{yhwdwz2026fpsi} push $\ell_p$-distance \mbox{FPSI} into
the symmetric-key regime (a reported $12$--$145\times$ improvement),
and report a false-positive leakage in a recent linear-complexity
\mbox{AHE}-based construction~\cite{dang2025fpsi}.  Meng et al.~\cite{mwzl2026} target unbalanced
biometric search under $L_\infty$ with sub-linear communication,
and further works push structure-aware and distance-aware-OT
directions~\cite{pstkz2025darot,zccbllww2025}.  None of these
addresses the fuzzy labeled \mbox{PSI} (\mbox{FLPSI}) setting we work
in, where the receiver learns the labels of similar records, though
all attack the same fuzzy-intersection problem under different
efficiency and metric profiles.

Several primitives sit near \mbox{CSTPSI}'s functionality but solve
a different problem.
\emph{Cardinality-threshold PSI}~\cite{ghosh2019threshold,badrinarayanan2021multiparty}
returns the exact-equality intersection only when its cardinality
exceeds a threshold $t$ and otherwise returns nothing; the
threshold there is over $|X \cap Y|$ across the entire sets, not
over component agreement within a single fuzzy item.
\emph{Labeled \mbox{PSI} from HE}
(\mbox{APSI}~\cite{chen2018labeled,cong2021labeled}) returns labels
for exact-match items; the per-element $1/F$ false-positive analysis
at the heart of its cuckoo/bin design is the lineage from which the
bound of Lemma~\ref{lem:far-bound} generalizes.  \emph{Searchable
encryption}, \emph{private information retrieval}, and
\emph{circuit-PSI}~\cite{chandran2022circuit,Pinkas:EUROCRYPT:2019}
share parts of the toolkit but address orthogonal functionalities:
encrypted-index query, single-row recovery, and arbitrary-circuit
set evaluation, respectively.

The thread that motivates this work cuts across all of these lines,
and Table~\ref{tab:audit-thread} collects it. The works in the
table target different problem settings, so the comparison is
qualitative positioning rather than a performance ranking. The common pattern is a
per-trial error that a single query runs against every record, so its
small per-trial probability compounds with the database size
(\S\ref{sec:intro}).  Several constructions do not analyze this
scaling: the unbalanced \mbox{PSI} of Resende and Aranha, built on
cuckoo-filter tags~\cite{rdfa2018}, the contact-discovery protocols
of Kales et al.~\cite{krssw2019}, the distance-aware \mbox{PSI} of
Chakraborti et al.~\cite{cfr2023}, and the $k$-out-of-$N$ labeled
fuzzy \mbox{PSI} of Uzun et al.~\cite{uzun2021fuzzy}, the baseline
kernel we extend.  Others control the error per deployment or per
execution: the \mbox{APSI} line sizes its hash range and cuckoo
bounds against the workload~\cite{chen2018labeled,cong2021labeled},
and modern \mbox{OKVS-PSI} absorbs it into the field
size~\cite{rr2022}.  A third group, including van Baarsen and
Pu~\cite{baarsenpu2024fuzzyhyperballs} and Yang et
al.~\cite{yhwdwz2026fpsi}, frames the error as information disclosure
and flags false-positive leakage in prior fuzzy-\mbox{PSI}
constructions.  
Closest to a fix, Garimella et al.~(CRYPTO'21)~\cite{Garimella:CRYPTO:2021}
amplify an \mbox{OKVS} overfitting probability from $p$ to $p^{c}$
inside the encoding structure, and the $\ell$-fold \mbox{bFSS}
repetition of Garimella et al.~(CRYPTO'22)~\cite{GRS22} above is the
same move at the function-secret-sharing layer.  Both strengthen one trial
inside a sub-primitive.  This is the wrong layer for the soundness
gap: neither composes the guarantee across the many trials a query
runs against a full database, and neither exposes the amplification
as a tunable protocol parameter with a matching security definition.
\mbox{CSTPSI} supplies exactly these two missing pieces, which makes
it, to our reading, the first composable treatment of the gap,
demonstrated in the threshold-labeled setting.

\section{Preliminaries}\label{sec:overview}

This section introduces the protocol class we analyse, formalizes
the realization soundness error (RSE) composition problem that any
protocol in this class faces at deployment scale, and rules out the
obvious parameter-level remedies.  Our solution, \mbox{CSTPSI}, is introduced in \S\ref{sec:cstpsi}
and formalized in \S\ref{sec:cstpsi-def}.

\subsection{Set-Threshold Fuzzy PSI ($k$-of-$N$ matching)}\label{sec:stlpsi-primitive}

We work over a protocol class commonly called \emph{Set-Threshold
fuzzy PSI} or \emph{$k$-of-$N$ matching}: each row of the sender's
database and each receiver query is an $N$-tuple of items from a
fixed domain $\mathcal{D}$, and a match is declared when at least
$k$ of $N$ item positions agree under the closeness predicate
$\mathsf{Cl}_\tau$~\cite{uzun2021fuzzy,cfr2023}.
Table~\ref{tab:notation} collects the symbols used throughout the
rest of the paper.  As a representative member of this class we
take \mbox{STLPSI}, the labeled fuzzy \mbox{PSI}
primitive of Uzun et al.~\cite{uzun2021fuzzy}, as the baseline protocol
since it carries the RSE composition problem we analyse in this
section in its sharpest form.

\begin{table}[t]
\centering
\caption{Notation and parameters used throughout the paper. Values are
  fixed across experiments; ranges are swept in \S\ref{sec:eval}.}
\label{tab:notation}
\small
\setlength{\tabcolsep}{4pt}
\begin{tabular}{@{}l p{0.80\columnwidth}@{}}
\toprule
Symbol & Meaning (fixed value, where applicable) \\
\midrule
$N$ & Component items per database row ($64$) \\
$k$ & Matching threshold of agreeing items ($2$) \\
$T$ & Independent token rounds ($1, 2, 3$) \\
$K$ & Label-chunk rounds for $23$-bit and $16$/$32$/$64$-byte labels ($1, 6, 12, 23$) \\
$D$ & Database size in rows ($1$K--$1$M) \\
$s_{\mathit{part}}$ & Maximum rows per partition ($32$) \\
$n_{\mathit{part}}$ & Number of partitions, $\lceil D / s_{\mathit{part}} \rceil$ \\
$F$ & BFV plaintext modulus and Shamir field $\mathbb{F}_F$ ($8519681$, ${\sim}23$-bit) \\
$m$ & BFV polynomial-modulus degree ($4096$) \\
$\mathcal{D}, \mathcal{L}$ & Item domain and label space \\
$X$ & Sender's items $\{x_1, \dots, x_D\}$, $x_e \in \mathcal{D}$ \\
$\ell_{x_e}$ & Label of sender item $x_e$ (in $\mathcal{L}$) \\
$Y$ & Receiver's query $(y_1, \dots, y_N) \in \mathcal{D}^N$ \\
$\mathsf{Cl}_\tau$ & Closeness predicate \\
$\kappa$ & Sender's AES-OPRF key \\
$b_j$ & Blinded query item: $\mathrm{AES}_\kappa(y_j) \bmod F$ \\
$P_{t,e}(x)$ & Per-row Shamir polynomial in round $t$ \\
$f_{t,i,p}(x)$ & Per-partition interpolating polynomial \\
$\mathsf{KR}$ & Shamir secret reconstruction (see \S\ref{sec:bb-shamir}) \\
\bottomrule
\end{tabular}
\end{table}

Attaching a label to each row turns this matching primitive into
\emph{Fuzzy Labeled PSI} (\mbox{FLPSI}, \S\ref{sec:intro}), the
functionality this paper studies: the receiver learns the labels
of the rows it matches, and nothing else.  The sender inputs a
database
$\mathit{Db} = \{(x_e, \ell_{x_e})\}_{e=1}^{D}$ in which each row
$x_e$ is an $N$-tuple of items and each row carries an opaque
label $\ell_{x_e} \in \mathbb{F}_F$.  The receiver inputs a
single fuzzy query $Y = (y_1, \dots, y_N)$ from $\mathcal{D}^N$.
The functionality returns to the receiver the label $\ell_{x_e}$
of every row $e$ that agrees with the query on at least $k$ of $N$
item positions, and $\bot$ to the sender.  Labels are
\emph{opaque} from \mbox{STLPSI}'s perspective: the primitive
treats them as arbitrary field elements, so any structural
decoration the labels need (a validation prefix, multiple
polynomials for cross-validation, additional verification rounds)
must be assembled by the caller before invoking the primitive and
stripped after the output.

Each row's $N$ items are blinded
under an OPRF ($b_{e,i}$ on the sender side, $b_j$ on the receiver
side); per item position $i$, the sender packs at most
$s_{\mathit{part}}$ rows into a single interpolating polynomial
$f_{t,i,p}$ of degree $s_{\mathit{part}}-1$ over $\mathbb{F}_F$,
evaluated at the receiver's blinded query inputs in a single
homomorphic round; the receiver recovers each polynomial value
and tries every $\binom{N}{k}$ unordered $k$-subset of item
positions per partition via $k$-point Lagrange reconstruction at
$x = 0$\footnote{Practical \mbox{STLPSI} deployments instantiate
the threshold at $k = 2$, yielding a $2$-point Lagrange
operation; the bounds in this section apply at general $k$.}.
Each token round uses a fresh degree-$(k-1)$ Shamir polynomial
$P_{t,e}$ per row whose constant term is the public token $0$;
reconstructions that hit $0$ trigger a candidate match.  The full
construction with all design rationale is in
\S\ref{sec:building-blocks}.

\subsection{The RSE Composition Problem}\label{sec:far-problem}

To our knowledge, \mbox{FLPSI} protocols
evaluated to date have not analyzed RSE scaling at deployment
scale (\S\ref{sec:background}).  We observe that the per-query RSE of this class scales
linearly with the number of database partitions
$n_{\mathit{part}} = \lceil D / s_{\mathit{part}}\rceil$, and
therefore degrades rapidly as $D$ grows into the database scales
that motivate fuzzy \mbox{PSI} in the first place: biometric
watchlist matching at the hundred-thousand to ten-million scale,
and credential-leak monitoring well above that scale.

The per-query RSE is a soundness defect, not just an accuracy
nuisance.  On a non-matching pair the kernel reconstructs an
\emph{off-curve} value, uniform in $\mathbb{F}_F$ and independent of
any row's payload, so it is not a stored label but a value the
plaintext matcher would never return.  This is distinct from
stored-label exposure, which happens only on a genuine $\ge k$ match,
the matcher's intrinsic false-match rate that lies outside our scope.
Lemma~\ref{lem:far-bound} formalizes the scaling and bounds the
probability of such a wrong accept; its proof shows the reconstructed
value is uniform.

\begin{figure}
  \center
  \resizebox{0.98\columnwidth}{!}{
\begin{lemma}[Composition Bound on the RSE]\label{lem:far-bound}
Let $\Pi_T$ be a labeled fuzzy \mbox{PSI} protocol with
$T \ge 1$ independent Shamir-based token rounds and parameters
$(N, k, s_{\mathit{part}}, F)$.  For a database of $D$ entries
with $n_{\mathit{part}} = \lceil D/s_{\mathit{part}}\rceil$, and
a random query $Y$ sharing fewer than $k$ items with every
enrolled entry, the probability that $\Pi_T$ wrongly accepts and
returns any value is at most
\begin{equation}
\Pr[\,\mathrm{accept}\,] \;\le\; \binom{N}{k}\,n_{\mathit{part}}\,\big/\,F^{T}
+ T \cdot \mathsf{Adv}^{\mathrm{prf}}_{\mathrm{AES}}(\mathcal{B}),
\label{eq:far-T}
\end{equation}
where $\mathsf{Adv}^{\mathrm{prf}}_{\mathrm{AES}}(\mathcal{B})$
is the \mbox{AES} pseudorandomness advantage of an efficient
adversary $\mathcal{B}$; under standard assumptions this term is
bounded by $2^{-128}$ up to negligible constants.  At $T = 1$,
the dominant term is tight to first order when
$\binom{N}{k}\,n_{\mathit{part}}/F \ll 1$:
$\binom{N}{k}\,n_{\mathit{part}}/F = 1 - (1 - 1/F)^{\binom{N}{k}\,n_{\mathit{part}}}
+ O\!\big((\binom{N}{k}\,n_{\mathit{part}}/F)^{2}\big)$.
Since $\binom{N}{k}\,n_{\mathit{part}}$ grows with $D$, this ratio
is not small at deployment scale, where the linear form
overestimates and the exact
$1 - (1 - 1/F)^{\binom{N}{k}\,n_{\mathit{part}}}$ should be used
(\S\ref{sec:eval}).
\end{lemma}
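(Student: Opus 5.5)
We will prove the bound with a standard hybrid argument followed by a union bound over the trial structure described in \S\ref{sec:stlpsi-primitive}. In hybrid $H_0$ the protocol $\Pi_T$ runs as specified, with each round's blinded items computed as $\mathrm{AES}_\kappa(\cdot) \bmod F$. In hybrids $H_1,\dots,H_T$ we replace, one token round at a time, the AES-OPRF outputs by truly random functions. Any change in the accept probability across one hop gives a PRF distinguisher $\mathcal{B}$. This $\mathcal{B}$ simulates the remaining rounds itself, runs the receiver's reconstruction, and outputs $1$ on accept. Summing the $T$ hops gives the additive term $T\cdot\mathsf{Adv}^{\mathrm{prf}}_{\mathrm{AES}}(\mathcal{B})$. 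If the implementation uses a single key $\kappa$ across rounds with round-separated inputs, we would use one hop instead, which still fits under the stated bound. We would also note that reducing a 128-bit output mod $F$ leaves a statistical bias of about $F/2^{128}$, which is absorbed into the negligible constant.

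In the final hybrid we fix a partition $p$, a $k$-subset $S$ of positions, and a round $t$. We then analyse the value the receiver obtains by Lagrange reconstruction at $x=0$ from the points $(b_j, f_{t,j,p}(b_j))_{j\in S}$. The query shares fewer than $k$ items with every row, so for each row $e$ at least one position $j\in S$ has $b_j \ne b_{e,j}$ as a fresh random element. On such a position the interpolating polynomial $f_{t,j,p}$, of degree $s_{\mathit{part}}-1$ through the partition's row points, is evaluated at a point off its defining set. We aim to show the following \emph{key claim}: conditioned on everything else, $f_{t,j,p}(b_j)$ is uniform in $\mathbb{F}_F$. Because the Lagrange coefficient of that position is nonzero, the reconstructed value is then uniform as well. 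Hence it equals the token $0$ with probability exactly $1/F$. This also yields the paper's remark that the off-curve output is uniform and independent of every label.

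Next we treat the rounds as independent. Each round draws fresh Shamir polynomials $P_{t,e}$ and, in the final hybrid, fresh random blinding. A candidate $(p,S)$ is accepted only if all $T$ reconstructions hit $0$, which happens with probability $F^{-T}$. A union bound over the $\binom{N}{k}\,n_{\mathit{part}}$ candidates gives the first term of \eqref{eq:far-T}. For the $T=1$ tightness statement we observe that the per-trial events are independent across $(p,S)$ at least to first order. This gives the exact form $1-(1-1/F)^{M}$ with $M=\binom{N}{k}n_{\mathit{part}}$. Its Taylor expansion is $M/F - \binom{M}{2}/F^2+\cdots$, which equals $M/F$ up to $O((M/F)^2)$.

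The main obstacle is the key claim. The value $f_{t,j,p}(b_j)$ is uniform only if the randomness behind it survives interpolation. Two things make this delicate. First, subsets $S$ sharing a position reuse the same evaluated value, so independence across trials holds only approximately; this is harmless for the union bound but matters for the exact $T=1$ formula. Second, the uniformity source has to be pinned down. We would first argue it through the fresh per-row Shamir share $P_{t,e}(b_{e,j})$: the degree-$(k-1)$ polynomial has random higher coefficients, so the interpolated polynomial is a random polynomial constrained only at the partition's $s_{\mathit{part}}$ points. Evaluating it at a point outside that set, where $b_j$ is random, then gives a uniform value. The residual event that $b_j$ collides with some $b_{e,j}$ without a true item match has probability at most $s_{\mathit{part}}/F$ per position. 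This is a PRF collision event that we fold into the hybrid, or we bound it explicitly if needed.
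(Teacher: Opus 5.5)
Your skeleton matches the paper's proof. It has a PRF hybrid charged at $T\cdot\mathsf{Adv}^{\mathrm{prf}}_{\mathrm{AES}}(\mathcal{B})$ and a per-round probability $1/F$ that a candidate reconstructs the token, coming from fresh Shamir coefficients pushed through the partition interpolant with a nonzero Lagrange weight at $x=0$. It then takes a factor $F^{-T}$ from independence of the token rounds and a union bound over the $\binom{N}{k}\,n_{\mathit{part}}$ candidates. Where you depart from the paper, however, the argument does not go through.

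First, $\Pi_T$ has no fresh blinding per round. The $b_j$ are produced once by the garbled circuit of step~\ref{step:gc} and reused in all $T+K$ rounds (Proposition~\ref{prop:1-gc}). So the final hybrid has a single random function, and the only per-round randomness is the $a_{t,e}$. This breaks your treatment of the event $b_j=b_{e,j}$ with $y_j\neq x_{e,j}$, in two ways:
\begin{itemize}
\item It cannot be folded into the hybrid, because it is not a PRF-distinguishing event. A truly random function reduced into $\mathbb{F}_F$, with $F\approx 2^{23}$, collides with probability about $1/F$ per pair.
\item It is not amplified, because the blinded values are common to every round. A row that shares $k-1$ items with $Y$ and collides at one further position behaves exactly like a true match in blinded space. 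It then reconstructs $0$ in all $T$ token rounds, with probability about $(N-k+1)/F$ per row whatever $T$ is.
\end{itemize}
Bounding the event explicitly therefore adds a $T$-independent term and does not give (\ref{eq:far-T}). The paper's sketch reaches (\ref{eq:far-T}) only because it passes directly from the item hypothesis to the blinded non-node condition ($b_{i_j}\neq b_{e,i_j}$). In effect it reads ``fewer than $k$ shared items'' at the blinded level and treats the blinding as collision-free. You need to adopt that reading explicitly, or add a collision term to the bound, rather than absorb the event into the hybrid.

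Second, your key claim is stronger than what holds, a looseness the paper's sketch shares. Within one round, the evaluations at different positions are linear in the same coefficients: for $k=2$, every token-round share of row $e$ is a multiple of $a_{t,e}$. So $f_{t,j,p}(b_j)$ is not uniform conditioned on the other positions' values. Moreover, $b_j\neq b_{e,j}$ for one particular $e$ does not place $b_j$ outside the whole node set $\{b_{e',j}\}_{e'\in p}$.

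The rigorous version writes the round-$t$ reconstruction for $(p,S)$ as $\sum_{e\in p} c_e(b)\,a_{t,e}$, where $c(b)$ depends only on the blinded values. Whenever $c(b)\neq 0$, this is $0$ with probability exactly $1/F$, and the $T$ rounds are exactly independent given $b$. That yields the $F^{-T}$ factor with no appeal to fresh blinding. The events with $c(b)=0$ escape amplification, and they include blinded agreement with a single row on all of $S$. This is why the hypothesis has to be imposed on the $b$'s.
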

  }
\vspace{-1.5em}
\end{figure}

\begin{proof}[Proof sketch]
Replace $\mathrm{AES}_{\kappa}$ with a random function $R$ at the
cost of $T \cdot \mathsf{Adv}^{\mathrm{prf}}_{\mathrm{AES}}(\mathcal{B})$
by a standard \mbox{PRF/PRP} hybrid argument across the $T$ token
rounds.  Conditioned on the $R$-output, fix a partition $p$
containing no entry with $\ge k$ matches against $Y$, and fix a
$k$-subset $I = \{i_1, \ldots, i_k\}$ of item positions.  In round
$t$, the sender's per-entry Shamir polynomial $P_{t,e}$ is sampled
with fresh random coefficients, so the partition polynomial
$f_{t,i,p}$ interpolates the partition through $s_{\mathit{part}}$
ordinate values $\{P_{t,e}(i)\}_{e \in p}$ that are independent
uniform draws over $\mathbb{F}_F$ (each is a fixed nonzero
linear combination of the row's fresh secret-zero Shamir
coefficients, which reduces to a nonzero scalar multiple of the
single fresh coefficient when $k = 2$).  Evaluated
at any point $b$ outside the $s_{\mathit{part}}$ interpolation
$x$-coordinates, $f_{t,i,p}(b)$ is a fixed nonzero linear
combination of those independent uniform values, hence itself
uniform in $\mathbb{F}_F$.  By hypothesis no entry in $p$ shares
$\ge k$ items with $Y$, so for every entry $e \in p$ at least one
position $i_j \in I$ is a non-node ($b_{i_j} \ne b_{e,i_j}$); the
$k$ values fed to the reconstruction therefore include at least
one such uniform value.  The
modular-bias and
zero-collision terms (the $0 \mapsto 1$ remap induces a per-value
bias of $2/F$ on the value $1$; the $2^{128} \to F$ reduction
induces a $O(F/2^{128}) \approx 10^{-32}$ bias) are absorbed into
the asymptotic.  The $k$-point Lagrange reconstruction at $x = 0$
is an affine combination of these $k$ values that gives nonzero
weight to a uniform non-node value, and is therefore uniform in
$\mathbb{F}_F$; equality
with the public token $0$ occurs with probability $1/F$.  Across
$T$ rounds with bin-local fresh coefficient samples, the
per-round events are independent, giving $(1/F)^{T}$.
This per-round independence rests on the Shamir coefficients being
sampled independently across rounds, an assumption on the randomness
source that Theorem~\ref{thm:multi-token} makes explicit.  A union bound over the $\binom{N}{k}$
candidate $k$-subsets in each of $n_{\mathit{part}}$ partitions
yields~(\ref{eq:far-T}); the union bound itself requires only
sub-additivity, not independence across bins.
\qed
\end{proof}

Lemma~\ref{lem:far-bound} generalizes the per-element
false-positive analysis of prior exact-matching PSI work, e.g.\
\mbox{APSI}~\cite{chen2018labeled}, which establishes $1/F$ per
query item per bin: setting $N = k = n_{\mathit{part}} = 1$
recovers their bound from ours.  The novelty here is lifting that
single-element analysis to the threshold setting, where the
combinatorial factor $\binom{N}{k}$ and the bin amplification
$n_{\mathit{part}}$ jointly drive RSE linearly in $D$.  Our
measurements over a reference codebase confirm this qualitative
shape; the full configuration table and the comparison against
the exact bound appear in \S\ref{sec:eval}.

\subsection{Why Existing Remedies Fall Short}\label{sec:far-why-not-F}

Two parameter-level remedies suggest themselves before a
structural change: (i) growing the partition size
$s_{\mathit{part}}$ to shrink the count
$n_{\mathit{part}} = \lceil D / s_{\mathit{part}}\rceil$, and (ii)
enlarging the plaintext modulus $F$ to shrink the per-trial $1/F$
probability.  Both target Lemma~\ref{lem:far-bound} directly, but
neither is viable at practical labeled fuzzy \mbox{PSI} parameters.

\paragraph{Growing the partition size $s_{\mathit{part}}$.}
Doubling $s_{\mathit{part}}$ halves $n_{\mathit{part}}$ and the
bound, but the per-partition interpolating polynomial has degree
$s_{\mathit{part}}-1$, so communication and homomorphic-evaluation
cost grow linearly in $s_{\mathit{part}}$ as well.  The trade-off is
linear-for-linear, and a cryptographically negligible bound forces
$s_{\mathit{part}}$ to a value where cross-partition amortization
collapses.

\paragraph{Enlarging the plaintext modulus $F$.}  In
\mbox{BFV}-style encryption the plaintext modulus must be prime and
satisfy $F \equiv 1 \pmod{2\,m}$ for \mbox{SIMD} (single-instruction
multiple-data) batch encoding, where $m$ is the polynomial-modulus degree.  At the $m$ used in
practical labeled-\mbox{PSI} deployments the prime budget at
$128$-bit security is exhausted near a fixed bit-width; enlarging
$F$ further forces $m$ to the next power of two, roughly doubling
per-operation evaluation cost and ciphertext size at the same
depth.  This more than offsets the improvement to the bound at every
practical label size (concrete costs in \S\ref{sec:eval}).

The structural fix therefore lies elsewhere: re-folding the
caller/primitive boundary so the amplification round count is a
parameter of the primitive, not a re-orchestration of its caller.
We introduce this fix as \mbox{CSTPSI} in \S\ref{sec:cstpsi}.

\section{Our Solution: CSTPSI}\label{sec:cstpsi}\label{sec:cstpsi-design}

We introduce \emph{Composable Set-Threshold PSI} (\mbox{CSTPSI}),
the primitive that takes the realization soundness error (RSE)
mitigation inside its boundary
by absorbing the caller-managed share-construction preamble into
the protocol itself.  Five design choices drive the construction:
where share construction lives, how many token rounds the
primitive admits, how the garbled-circuit OPRF interacts with
the round count, how the receiver's encrypted query material
is amortized across rounds, and how the receiver reuses its
reconstruction work across rounds.

\paragraph{Taking share construction into the protocol.}
Set-Threshold fuzzy \mbox{PSI} primitives that treat labels as
opaque field elements delegate share construction to the caller,
which makes any structural change to the share machinery (adding
verification rounds for amplification, cross-validation
encodings, longer-label transport) a caller-side surgery rather
than a parameter change.  \mbox{CSTPSI} accepts raw labels at its
interface and constructs Shamir
polynomials~\cite{shamir1979share} internally, parameterized by
a configurable round count $T$.  Variations on the share
machinery then become parameter choices.

\paragraph{The role of $T$ (multi-token amplification).}
\mbox{CSTPSI} exposes $T$, the number of independent token
rounds, as a first-class protocol parameter.  Each added token
round acts as an independent \emph{validator} of a candidate
match: the round re-checks the same candidate with fresh Shamir
coefficients, so a spurious accept must fool all $T$ validators
at once.  Fresh coefficients per round drive the per-trial
spurious-accept probability geometrically to $1/F^{T}$.  The
validator lives at the protocol-round level, so it is composable
across the whole query workload, and we \emph{size} the number of
validators to the database scale through the closed-form bound
(Lemma~\ref{lem:far-bound}).  Corollary~\ref{cor:sufficient-T} gives
the sufficient $T$ for any target scale, and $T = 2$ already makes the
per-query bound engineering-negligible at million-scale $D$
(\S\ref{sec:sec-amplification}).
The full amplification statement is
Theorem~\ref{thm:multi-token}.

\paragraph{Once-per-session GC-OPRF.}
The OPRF used to blind query and database items is realized as a
two-party garbled-circuit AES evaluation, which mutually conceals
the query items from the sender and the key from the receiver.
A naive realization runs this circuit per round, inflating the
receiver's online cost by a factor $T + K$ and making the GC
step the dominant component of online latency at any reasonable
$K$.  \mbox{CSTPSI} executes the garbled
circuit~\cite{wang2017emp} only \emph{once per online session},
since the blinded queries $b_j = \mathrm{AES}_\kappa(y_j) \bmod F$
are deterministic in the query items and the sender's key,
hence independent of the round.  The general principle of
minimizing protocol-conversion steps in a mixed
arithmetic--boolean computation traces to \mbox{ABY}~\cite{aby}.

\paragraph{Once-per-session encrypted query powers with sender caching.}
The sender's per-partition polynomial evaluation
$f_{t,i,p}(b_j) = \sum_{\ell=0}^{d-1} c_{t,i,p,\ell} \cdot
\mathrm{Enc}(b_j^\ell)$, with $d = s_{\mathit{part}}$, requires
ciphertexts $\mathrm{Enc}(b_j^\ell)$ for every $\ell \in [0, d)$.
Following the baby-step--giant-step (BSGS) windowing scheme used
in APSI-style labeled \mbox{PSI}~\cite{chen2018labeled,cong2021labeled,uzun2021fuzzy}, the
receiver transmits only a sparse window of $O(\sqrt{d})$
encrypted powers; the sender expands this window to the full
power set via $O(\sqrt{d})$ ciphertext-ciphertext multiplications
at session setup, consuming a small constant number of \mbox{BFV}
multiplicative levels.  \mbox{CSTPSI} then \emph{caches} the
expanded power bundle on the sender side for the duration of the
online session, so the per-round sender work reduces to
plaintext-ciphertext (coefficients $\times$ power) inner products with
no per-round ciphertext-ciphertext multiplications. The expansion cost is paid once and amortized over all $T + K$ rounds.
\footnote{The trade-off between \mbox{BSGS} and a
full-receiver-side pre-encryption variants depends on hardware
and deployment regime (network bandwidth vs computation);
Appendix~\ref{app:query-powers} describes both variants.}

\paragraph{Cached reconstruction on the receiver.}
At reconstruction the receiver runs a two-point Lagrange combination
at $x = 0$ for each candidate pair $\{i, j\}$ in a partition.  Its
weights depend only on the blinded query coordinates $b_i, b_j$, which
the single garbled circuit fixes for the whole session, so they are
identical across all $T + K$ rounds.  \mbox{CSTPSI} computes the
modular inverses in these weights once per pair and reuses them across
every round, replacing a per-round field inversion with a cached
lookup.

\mbox{CSTPSI} inherits the semi-honest two-party model and
building-block assumptions of \S\ref{sec:threat}.  Within that model
the $T$ token rounds also blunt a receiver who probes with crafted
queries to force a spurious accept, since each candidate must then
collide in all $T$ rounds rather than one (probability $1/F^{T}$, not
$1/F$); a sender that omits entries merely substitutes inputs and is
not a separate attack.

\section{Formal Definition of CSTPSI}\label{sec:cstpsi-def}

We formally define the \mbox{CSTPSI} primitive: its syntax, the
ideal functionality \(\mathcal{F}_{\mathrm{CSTPSI}}\) it realizes,
and its correctness clause. We start from a closeness abstraction
(Definition~\ref{def:closeness-domain}) that fixes the item-level
fuzzy-matching relation, then give the protocol syntax
(Definition~\ref{def:cstpsi-syntax}) over
it~\cite{boldyreva2014fuzzy, uzun2021fuzzy}.

\begin{figure}[h]
  \center
  \resizebox{0.98\columnwidth}{!}{
  \begin{definition}[Closeness Domain]\label{def:closeness-domain}
  A \emph{closeness domain} for set-threshold parameters
  \((N, k)\) with \(1 \le k \le N\) is a pair
  \(\Lambda = (\mathcal{D}, \mathsf{Cl}_\tau)\), where
  \(\mathcal{D}\) is an item set and
  \(\mathsf{Cl}_\tau \colon \mathcal{D}^N \times \mathcal{D}^N \to
  \{0, 1\}\) is the symmetric set-threshold closeness function
  defined by
  \[
    \mathsf{Cl}_\tau(Y, X) \;=\; 1
    \;\;\iff\;\;
    \bigl|\{\,j \in [N] : y_j = x_j\,\}\bigr| \;\ge\; k.
  \]
  \end{definition}
  }
  \vspace{-0.2em}
  \caption*{We read \(\mathsf{Cl}_\tau(Y, X) = 1\) as ``\(Y\) is close to
\(X\)'': the two \(N\)-tuples agree on at least \(k\) of the
\(N\) item positions.}
\vspace{-1.5em}
\end{figure}

\begin{figure}[h]
\center
  \resizebox{0.98\columnwidth}{!}{
\begin{definition}[Composable Set-Threshold PSI -- CSTPSI]\label{def:cstpsi-syntax}
A \emph{CSTPSI} protocol \(\Pi_T\) is a two-party interactive
protocol between a sender \(S\) and a receiver \(R\), parameterized
by a closeness domain
\(\Lambda = (\mathcal{D}, \mathsf{Cl}_\tau)\) per
Def.~\ref{def:closeness-domain}, a label space
\(\mathcal{L}\), the set-threshold parameters \((N, k)\) with
\(1 \le k \le N\), the token-round count \(T \ge 1\), and a
security parameter \(\lambda\).
\smallskip

\noindent\textit{Inputs.}  \(S\) holds a labeled database
\(\mathit{Db} = \{(x_e, \ell_{x_e})\}_{e=1}^{D}\) with each
\(x_e \in \mathcal{D}^{N}\) an \(N\)-tuple of component items and
each \(\ell_{x_e} \in \mathcal{L}\) the row's payload label.
\(R\) holds a query \(Y \in \mathcal{D}^{N}\).  Both parties agree
on \((\Lambda, \mathcal{L}, N, k, T, \lambda)\) and on any public
parameters the concrete realization requires.
\smallskip

\noindent\textit{Outputs.}  \(R\) outputs the multiset
\(\mathcal{R} = \{\,\ell_{x_e} : \mathsf{Cl}_\tau(Y, x_e) = 1\,\}\);
\(S\) outputs \(\bot\).
\end{definition}
  }
\end{figure}

\textbf{Ideal functionality.} The CSTPSI protocol \(\Pi_T\) realizes
\(\mathcal{F}_{\mathrm{CSTPSI}}\) semi-honestly with the simulator
gap quantified in Theorem~\ref{thm:main-sec}.
The token-round count \(T\) is a first-class protocol parameter, not a
tuning knob hidden inside the realization. The protocols
\(\Pi_1, \Pi_2, \dots\) all realize this same functionality, but with
\emph{different} simulator gaps; an operator picks \(T\) to meet a target
gap at a target deployment scale.

\begin{figure}[h]
\center
  \resizebox{0.98\columnwidth}{!}{
\begin{functionality}[Ideal functionality \(\mathcal{F}_{\mathrm{CSTPSI}}\)]\label{def:F-cstpsi}
\textit{Parameters.} Structural parameters \((N, k)\) with
\(1 \le k \le N\); closeness setting
\((\mathcal{D}, \mathsf{Cl}_\tau)\); label space \(\mathcal{L}\).

\noindent\textit{Inputs.} From \(S\), a labeled database
\(\mathit{Db} = \{(x_e, \ell_{x_e})\}_{e=1}^{D}\) with
\(x_e \in \mathcal{D}^{N}\) and \(\ell_{x_e} \in \mathcal{L}\); from
\(R\), a query \(Y = (y_1, \dots, y_N) \in \mathcal{D}^{N}\).

\noindent\textit{Computation.} For each entry \(e\), mark it a match iff
\(\mathsf{Cl}_\tau(Y, x_e) = 1\).

\noindent\textit{Outputs.} To \(R\), the multiset
\(\mathcal{R} = \{\,\ell_{x_e} : \mathsf{Cl}_\tau(Y, x_e) = 1\,\}\); to
\(S\), \(\bot\).

\noindent\textit{Leakage.} None beyond what \(\mathcal{R}\) and
\(|Y|, |\mathit{Db}|\) reveal: no information on non-matching
entries and no per-component side channel.
\end{functionality}
  }
\end{figure}

\begin{figure}[!h]
  \center
  \resizebox{0.98\columnwidth}{!}{
\begin{definition}[CSTPSI correctness]\label{def:cstpsi-correctness}
A CSTPSI protocol \(\Pi_T\) is \emph{correct} if, for every database
\(\mathit{Db}\) and every query \(Y\) such that there exists an
enrolled entry \(e\) with \(\mathsf{Cl}_\tau(Y, x_e) = 1\), the output multiset
\(\mathcal{R}_{\Pi_T}\) computed by the honest execution of
\(\Pi_T\) contains \(\ell_{x_e}\) with probability \(1\).
Equivalently, the kernel false-reject rate of \(\Pi_T\), conditioned
on a true \(k\)-of-\(N\) component agreement, is zero (the
representation's recall miss reported in \S\ref{sec:eval-real} is a
plaintext property of the sketch, identical for the matcher,
\mbox{STLPSI}, and \mbox{CSTPSI}).
\end{definition}}
\vspace{-2em}
\end{figure}

\begin{figure*}[t]
\centering
\resizebox{0.96\width}{!}{
\begin{tcolorbox}[
  enhanced, colback=cProto!8, colframe=cProto!75,
  boxrule=0.3pt, arc=2pt,
  left=6pt, right=6pt, top=4pt, bottom=4pt]
\textbf{Inputs.} \emph{Sender} S inputs $\mathit{Db} = \{(x_e,
\ell_{x_e})\}_{e=1}^{D}$, where each $x_e \in \mathcal{D}^{N}$ is an $N$-tuple of items and each $\ell_{x_e} \in \mathcal{L}$.  \emph{Receiver}
R inputs $Y = (y_1, \dots, y_N)$ with $y_j \in \mathcal{D}$.  Both
parties agree on the parameters $(N, k, T, F, s_{\mathit{part}})$
and on \mbox{BFV} public parameters.

\begin{enumerate}
  \item \textbf{[Init]}\label{step:init} S samples an AES-128 OPRF
    key $\kappa \getsr \{0,1\}^{128}$.

  \item \textbf{[S Offline: partition]}\label{step:partition}
    S partitions $\mathit{Db}$ into $n_{\mathit{part}} = \lceil D /
    s_{\mathit{part}} \rceil$ partitions of at most
    $s_{\mathit{part}}$ rows each, rejecting within-partition duplicate
    items so step~\ref{step:interp}'s interpolation is well-defined.

  \item \textbf{[S Offline: blind]}\label{step:blind} For each
    row $e$ and each item position $i \in [N]$, S computes
    $b_{e,i} = \mathrm{AES}_\kappa(x_{e,i}) \bmod F$ (with the
    convention $0 \mapsto 1$) and replaces $x_{e,i}$ with $b_{e,i}$
    in place.

  \item \textbf{[S Offline: share]}\label{step:share} For each
    row $e$ and each round $t \in \{0, \dots, T + K - 1\}$, S draws
    a fresh degree-$(k-1)$ Shamir polynomial $P_{t,e}(x) = s_{t,e}
    + a_{t,e}\,x$ with $a_{t,e} \getsr \mathbb{F}_F$, where
    $s_{t,e} = 0$ for $t \in [0, T)$ and $s_{t,e}$ is the
    $(t{-}T{+}1)$-th chunk of $\ell_{x_e}$ for $t \in [T, T + K)$,
    and evaluates $P_{t,e}(i)$ for each $i \in [N]$.

  \item \textbf{[S Offline: interpolate \& pack]}\label{step:interp}
    For each round $t$, query position $i$, and partition $p$, S
    interpolates the unique degree-$(d{-}1)$ polynomial $f_{t,i,p}$
    with $d = s_{\mathit{part}}$ through the points
    $\{(b_{e,i}, P_{t,e}(i)) : e \in p\}$ over $\mathbb{F}_F$ and
    SIMD-batches the coefficients of $m / N$ partitions into one
    \mbox{SEAL} plaintext.

  \item \textbf{[GC-OPRF: 1-GC]}\label{step:gc} R and S run a
    single two-party AES-128 garbled circuit~\cite{wang2017emp}: R
    inputs $(y_1, \dots, y_N)$, S inputs $\kappa$, R learns
    $b_j = \mathrm{AES}_\kappa(y_j) \bmod F$ for each $j \in [N]$
    (with the convention $0 \mapsto 1$), and S learns $\bot$.  This
    step is executed exactly once per online session and is reused
    across all $T + K$ subsequent rounds.

  \item \textbf{[Online Query]}\label{step:onlinequery}
    For each $j \in [N]$, R homomorphically encrypts a
    BSGS window of $O(\sqrt{d})$ powers of $b_j$ and
    transmits it once; S expands it to the full power set
    $\bigl\{\mathrm{Enc}(b_j^{\ell})\bigr\}_{\ell \in [0, d)}$ via
    $O(\sqrt{d})$ ciphertext multiplications and \emph{caches} the
    bundle, reused by all $T + K$ rounds without re-expansion.

  \item \textbf{[Per-Round Homomorphic Evaluation]}\label{step:hom}
    For each round $t \in \{0, \dots, T + K - 1\}$ in turn: for
    each $(i, p)$, S homomorphically evaluates the partition
    polynomial $\mathrm{Enc}\bigl(f_{t,i,p}(b_j)\bigr) =
    \sum_{\ell=0}^{d-1} c_{t,i,p,\ell} \cdot \mathrm{Enc}(b_j^\ell)$
    in batched form, and returns $k$ \mbox{SEAL} ciphertexts per
    round (one per query position $i, j$ consumed by
    reconstruction).  R decrypts and stores the round-$t$
    evaluations $f_{t,i,p}(b_j)$ for every $(i, j, p)$.

  \item \textbf{[Receiver Reconstruction: token check]}\label{step:reconstruct-token}
    For each partition $p$ and each unordered pair $\{i, j\} \in
    \binom{[N]}{k}$, R computes $\mathsf{KR}(i, j, t)$
    with the two-point Lagrange weights at $x = 0$.  R records the
    pair $(p, \{i, j\})$ as a token candidate iff
    $\mathsf{KR}(i, j, t) = 0$ for \emph{every} token round
    $t \in [0, T)$.

  \item \textbf{[Receiver Reconstruction: label
    recovery]}\label{step:reconstruct-label} For each token
    candidate $(p, \{i, j\})$, R computes $\mathsf{KR}(i, j, t)$
    for every label round $t \in [T, T + K)$ to recover the $K$
    chunks of the label, and reassembles them into
    $\ell_{x_e} \in \mathcal{L}$.
\end{enumerate}
\vspace{-0.2\baselineskip}

\textbf{Output.} R outputs the multiset $\{\ell_{x_e} : e
\text{ matched}\}$.  S outputs $\bot$.
\end{tcolorbox}
}
\caption{\mbox{CSTPSI} protocol $\Pi_T$.  Lemma~\ref{lem:far-bound}
gives the realization soundness error (RSE) bound at any $T$; the
multi-token-round mitigation drives the bound to negligible for
$T \ge 2$ at the deployment scales of \S\ref{sec:eval}.}
\label{fig:cstpsi-protocol}
\vspace{-1.2em}
\end{figure*}

\textbf{Correctness.} The correctness clause is deterministic with respect to true
matches.  We deliberately do not adopt an \(\varepsilon\)-correctness
clause that would allow the protocol to be wrong on spurious accepts
with negligible probability~\cite{uzun2021fuzzy}.
An \(\varepsilon\)-correctness clause creates a loophole: it
discharges the obligation to account for spurious accepts in either the
correctness analysis (where they would surface as deviations from
\(\mathcal{F}_{\mathrm{CSTPSI}}\)) or the security analysis (where
they would surface as simulator-gap mass).  We instead treat the
spurious-accept event as a \emph{soundness} event: a quantified
deviation from \(\mathcal{F}_{\mathrm{CSTPSI}}\)'s authoritative
output, in which the real protocol accepts a query the ideal
functionality would reject and returns a value the ideal
functionality would never produce.  Lemma~\ref{lem:far-bound} and
Theorem~\ref{thm:multi-token} bound this event, and it appears as a
soundness contribution to the simulator gap of
Theorem~\ref{thm:main-sec}.  Accounting for it this way closes the
loophole and composes cleanly with downstream applications that
treat \(\mathcal{F}_{\mathrm{CSTPSI}}\)'s output as authoritative.

The CSTPSI security definition and the proof that \(\Pi_T\)
realizes \(\mathcal{F}_{\mathrm{CSTPSI}}\) are presented in
\S\ref{sec:security} and Appendix~\ref{app:security-proof}.

\section{Building Blocks of CSTPSI}\label{sec:building-blocks}

We instantiate \mbox{CSTPSI} from three standard primitives:
Shamir secret sharing, the \mbox{BFV} leveled homomorphic encryption
(HE) scheme, and a two-party AES-128 garbled-circuit
\mbox{OPRF}.  We then assemble
\mbox{CSTPSI} from these primitives in
\S\ref{sec:bb-construction} as a self-contained protocol $\Pi_T$,
and state the full protocol in
\S\ref{sec:cstpsi-protocol}.  Notations follow
Table~\ref{tab:notation}.

\subsection{Shamir Secret Sharing}\label{sec:bb-shamir}

A $k$-out-of-$N$ Shamir secret sharing scheme~\cite{shamir1979share}
over a finite field $\mathbb{F}_F$ is a pair of algorithms
$(\mathsf{KS}, \mathsf{KR})$.\footnote{We fix $k = 2$. Raising $k$ inflates the
per-pair work and the RSE bound $\binom{N}{k}\,n_{\mathit{part}}/F^{T}$ through
$\binom{N}{k}$ ($2016$ at $k{=}2$, $41664$ at $k{=}3$) but leaves the per-trial
accept probability at $1/F$; only $T$ lowers that, to $1/F^{T}$
(Theorem~\ref{thm:multi-token}).}  $\mathsf{KS}$ takes a secret
$s \in \mathbb{F}_F$, samples a random degree-$(k-1)$ polynomial
$P(x) = s + a_1 x + \dots + a_{k-1} x^{k-1}$ with
$a_i \getsr \mathbb{F}_F$, and returns the share set
$\{(j, P(j))\}_{j \in [N]}$.  $\mathsf{KR}$ takes any $k$ of these
shares and recovers $s$; fewer
than $k$ shares leave $s$ information-theoretically uniform in
$\mathbb{F}_F$.

\subsection{BFV Homomorphic Encryption}\label{sec:bb-bfv}

The \mbox{BFV} scheme\footnote{We instantiate the \mbox{BFV} scheme 
using Microsoft \mbox{SEAL} 4.1.2~\cite{seal}.} is a semantically
secure leveled homomorphic encryption scheme over the plaintext ring
$\mathbb{Z}_F[X]/(X^m + 1)$, supporting SIMD batching in $m$
parallel slots over $\mathbb{F}_F$ at the parameter regime presented
in Table~\ref{tab:notation}. \mbox{CSTPSI}'s homomorphic evaluation
is a degree-$(s_{\mathit{part}}{-}1)$ univariate polynomial per
query position, evaluated batched across $\lfloor m / N \rfloor$
partitions per ciphertext. 

\subsection{Garbled-Circuit OPRF}\label{sec:bb-gc-oprf}

We realize the OPRF that blinds query and database items as a
two-party AES-128 evaluation in semi-honest Yao garbled
circuits, instantiated via the EMP
toolkit~\cite{wang2017emp}.\footnote{Standard semi-honest 2PC
realization of the AES functionality.}  The sender holds the 128-bit key $\kappa$,
the receiver inputs the query items $(y_1, \dots, y_N)$, and the
receiver alone learns $b_j = \mathrm{AES}_\kappa(y_j) \bmod F$
for $j \in [N]$, with the convention $0 \mapsto 1$ on the zero
pre-image; the sender learns $\bot$.  The sender blinds its own
database locally: $b_{e,i} = \mathrm{AES}_\kappa(x_{e,i})
\bmod F$.

\subsection{Full Protocol Construction}\label{sec:bb-construction}\label{sec:cstpsi-protocol}\label{sec:1-gc}

Figure~\ref{fig:cstpsi-protocol} formally states the CSTPSI
protocol $\Pi_T$ (parametrized by the token-round count $T$)
in full, built from the three primitives above, and realizes
the ideal functionality $\mathcal{F}_{\mathrm{CSTPSI}}$ in
Functionality~\ref{def:F-cstpsi}.
Appendix~\ref{app:protocol-diagram} complements this with a
message-sequence walkthrough (Figure~\ref{fig:cstpsi-sequence})
across the database, sender, and receiver.

\mbox{CSTPSI} runs in an \emph{offline} phase (sender-only) and
an \emph{online} phase (two-party), adopting the standard
\mbox{PSI} / \mbox{HE} optimizations for compressing
communication and reducing HE circuit depth that have become
canonical in the labeled-\mbox{PSI}
line~\cite{chen2018labeled, cong2021labeled, uzun2021fuzzy,
chen2017fast, bgv2014leveled, smartvercauteren2014simd,
psz2015phasing, psww2018cuckoo, fipr2005keyword,
ducasstehle2016sanitization}: \emph{partitioning} the database
into $s_{\mathit{part}}$-sized chunks, baby-step--giant-step
(\emph{BSGS}) windowing of the encrypted query powers,
\emph{SIMD batching} across partitions, and \emph{noise
flooding} plus \emph{modulus switching} on the returned
ciphertexts.  We refer to those works for technique details and
describe only how \mbox{CSTPSI} assembles them below.

We instantiate the design choices motivated in \S\ref
{sec:cstpsi}: per-round share construction (step~\ref{step:share}), 
a single \mbox{AES} garbled circuit reused across all rounds 
(step~\ref{step:gc}), and one cached \mbox{BSGS} query-power window (step~\ref{step:onlinequery}), concretely in Figure~\ref{fig:cstpsi-protocol}.

\smallskip
\textbf{Offline.}  First, $R$ and $S$ agree on the \mbox{BFV}
public parameters (Table~\ref{tab:notation}); $R$ generates the
\mbox{BFV} key pair and shares the public key with $S$.  $S$
then proceeds as follows.  $S$ samples an AES key $\kappa$,
blinds each database item to
$b_{e,i} = \mathrm{AES}_\kappa(x_{e,i}) \bmod F$, and, for
every row $e$ and every round $t \in \{0, \dots, T + K - 1\}$,
draws a fresh degree-$(k{-}1)$ Shamir polynomial
$P_{t,e}(x) = s_{t,e} + a_{t,e}\, x$ with $s_{t,e} = 0$ for
token rounds $t \in [0, T)$ and $s_{t,e}$ equal to the
$(t{-}T{+}1)$-th chunk of $\ell_{x_e}$ for label rounds
$t \in [T, T + K)$.  Per query position $i \in [N]$ and per
partition $p$, $S$ interpolates the unique
degree-$(s_{\mathit{part}}{-}1)$ polynomial $f_{t,i,p}$ over
$\mathbb{F}_F$ through
$\bigl\{\,(b_{e,i},\,P_{t,e}(i)) : e \in p\,\bigr\}$ and
SIMD-packs $\lfloor m / N \rfloor$ partitions per \mbox{SEAL}
plaintext.

\smallskip
\textbf{Online.}  $R$ obtains the blinded query
$(b_1, \dots, b_N) = (\mathrm{AES}_\kappa(y_1), \dots,
\mathrm{AES}_\kappa(y_N)) \bmod F$ via a single garbled-circuit
\mbox{AES} execution, transmits a \mbox{BSGS} sparse window of
encrypted query powers once, and decrypts $f_{t,i,p}(b_j)$ as
$S$ returns the round-$t$ ciphertexts (re-randomized via noise
flooding and shrunk via modulus switching).  Reconstruction
applies the Shamir reconstructor $\mathsf{KR}$ (of
\S\ref{sec:bb-shamir}) to every unordered pair
$(i, j) \in \binom{[N]}{k}$ within each partition; the explicit
$k = 2$ form is given in step~\ref{step:reconstruct-token} of
Figure~\ref{fig:cstpsi-protocol}.  Its Lagrange weights at $x = 0$
depend only on the blinded coordinates $b_i, b_j$, which the single
garbled circuit fixes for the session, so $R$ computes their modular
inverses once per pair and reuses them across all $T + K$ rounds,
sparing a per-round field inversion.  The $T$ token rounds, whose
Shamir secret is the public token $0$, are the independent validators
of \S\ref{sec:cstpsi}; Theorem~\ref{thm:multi-token} bounds their
joint spurious accept at $1/F^{T}$.

\section{Security Analysis}\label{sec:security}

\begin{figure}[t]
  \center
  \resizebox{0.95\width}{!}{
\begin{definition}[Semi-honest realization]\label{def:cstpsi-security}
Let \(\Pi_T\) be a CSTPSI protocol with security parameter
\(\lambda\) and let \(P \in \{S, R\}\) denote a corrupted party.
The \emph{real-world view}
\[
  \Real_{\Pi_T, P}(\lambda; Y, \mathit{Db})
\]
is the tuple \((\View_P, y_R, y_S)\), where \(\View_P\) collects
\(P\)'s input, randomness, and received messages over an honest
execution of \(\Pi_T\) at security parameter \(\lambda\) on inputs
\((Y, \mathit{Db})\), and \((y_R, y_S)\) are the parties' outputs.
The \emph{ideal-world view}
\[
  \Ideal_{\mathcal{F}_{\mathrm{CSTPSI}}, \mathsf{Sim}, P}(\lambda; Y, \mathit{Db})
\]
is obtained by computing \((y_R, y_S) \gets
\mathcal{F}_{\mathrm{CSTPSI}}(Y, \mathit{Db})\) and outputting
\((\mathsf{Sim}(\lambda, P, x_P, y_P),\, y_R, y_S)\), where
\(x_P\) is \(P\)'s input and \(\mathsf{Sim}\) is a probabilistic
polynomial-time simulator.

We say that \(\Pi_T\) \emph{semi-honestly realizes}
\(\mathcal{F}_{\mathrm{CSTPSI}}\) if, for every corrupted party
\(P \in \{S, R\}\), there exists a simulator \(\mathsf{Sim}_P\)
such that for every valid input pair \((Y, \mathit{Db})\),
\[
  \Real_{\Pi_T, P}(\lambda; Y, \mathit{Db})
  \indist
  \Ideal_{\mathcal{F}_{\mathrm{CSTPSI}}, \mathsf{Sim}_P, P}(\lambda; Y, \mathit{Db}).
\]
The maximum advantage of a polynomial-time distinguisher between
the two distributions, taken over both \(P\) and all valid inputs,
is the \emph{simulator gap} of \(\Pi_T\).
\end{definition}
}
\vspace{-0.5em}
\end{figure}

The threat model and the
ideal functionality \(\mathcal{F}_{\mathrm{CSTPSI}}\) are stated in
\S\ref{sec:threat} and Functionality~\ref{def:F-cstpsi}
respectively.  We adopt the standard simulation-based, real/ideal
semi-honest security definition (Definition~\ref{def:cstpsi-security}).  This section states the formal definition, the security
theorem for \(\Pi_T\) of \S\ref{sec:cstpsi-protocol}, and its proof
sketch; the simulator constructions and supporting propositions are
deferred to Appendix~\ref{app:security-proof}.

\subsection{Main Theorem}\label{sec:sec-main}
\begin{figure}[t]
  \center
  \resizebox{0.95\width}{!}{
\begin{theorem}[Semi-honest security of \(\Pi_T\)]\label{thm:main-sec}
    Assuming a semantically secure homomorphic encryption scheme, the
pseudorandomness of \mbox{AES-128}, and a semi-honest-secure garbled-circuit protocol, the
protocol \(\Pi_T\) realizes the ideal functionality
\(\mathcal{F}_{\mathrm{CSTPSI}}\) of Functionality~\ref{def:F-cstpsi}
in the sense of Definition~\ref{def:cstpsi-security} against
semi-honest adversaries with simulator gap at most
\[
\binom{N}{k}\,\frac{n_{\mathit{part}}}{F^{T}}
\;+\; T \cdot \mathsf{Adv}^{\mathrm{prf}}_{\mathrm{AES}}(\mathcal{B})
\;+\; \mathsf{negl}(\lambda),
\]
where the first term is the realization soundness error (RSE)
bound of
Theorem~\ref{thm:multi-token} (via Lemma~\ref{lem:far-bound}), and the
\(\mathsf{negl}(\lambda)\) term absorbs the HE-semantic-security and
\mbox{GC} simulator (privacy) gaps.
\end{theorem}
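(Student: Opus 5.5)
The plan is a separate simulator for each corruption case, followed by a hybrid argument from the real view to the ideal view. Each hop costs one term of the stated gap. The soundness term is charged by treating the spurious-accept event as a \emph{bad event}, in the Shoup ``identical-until-bad'' sense, rather than as a simulation failure.

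\textbf{Corrupted sender.} The sender's view is its input $\mathit{Db}$, its randomness (the key $\kappa$ and the Shamir coefficients $a_{t,e}$), and its received messages. Those messages are the GC transcript and the BSGS window of encrypted query powers under the receiver's public key. $\mathsf{Sim}_S$ does three things:
\begin{itemize}
\item it runs the semi-honest GC simulator on input $\kappa$ and output $\bot$;
\item it sends $O(\sqrt{d})\cdot N$ encryptions of $0$ in place of $\mathrm{Enc}(b_j^\ell)$;
\item it outputs the honest offline computation on $\mathit{Db}$, since that phase is local.
\end{itemize}
Two hybrids suffice. The first swaps in the GC simulator, at the cost of the GC privacy gap. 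The second swaps the ciphertexts for encryptions of $0$, by IND-CPA of BFV. The sender never decrypts, and the sender's output is $\bot$ in both worlds, so the joint distributions with outputs match up to $\mathsf{negl}(\lambda)$.

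\textbf{Corrupted receiver.} $\mathsf{Sim}_R$ receives $Y$ and the ideal output $\mathcal{R}$, and must produce the GC view, the blinded values $b_j$, and the decrypted evaluations $f_{t,i,p}(b_j)$ for all $t<T+K$. The hybrids run in this order.

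\emph{H1.} Replace the GC view by the GC simulator's output on $(Y, (b_j))$. This costs a term absorbed in $\mathsf{negl}(\lambda)$.

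\emph{H2.} Replace $\mathrm{AES}_\kappa$ by a random function $R$ on both the receiver's and the database's items. The $T$ token rounds are treated as in the proof of Lemma~\ref{lem:far-bound}, which gives the term $T\cdot\mathsf{Adv}^{\mathrm{prf}}_{\mathrm{AES}}(\mathcal{B})$. Under $R$, the $b_j$ are uniform, up to the $0\mapsto 1$ and reduction biases already absorbed in the lemma.

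\emph{H3.} Replace each returned ciphertext by a fresh encryption of its plaintext. Noise flooding and modulus switching make the decrypted ciphertext statistically independent of the sender's circuit, so this hop costs only a $\mathsf{negl}(\lambda)$ statistical term.

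\emph{H4.} The information-theoretic step: characterize the decrypted values. Fix a partition $p$ and position pair $\{i,j\}$.
\begin{itemize}
\item If some row $e\in p$ is node-aligned with the query on both positions ($b_i=b_{e,i}$ and $b_j=b_{e,j}$), then $f_{t,i,p}(b_i)=P_{t,e}(i)$. These are genuine Shamir shares. They are uniform subject to reconstructing $0$ in token rounds and the label chunk in label rounds.
\item Otherwise, by the argument of Lemma~\ref{lem:far-bound}, every value involving a non-node coordinate is an independent uniform field element.
\end{itemize}
$\mathsf{Sim}_R$ samples accordingly. It knows which pairs are true matches only through $\mathcal{R}$ and $Y$, and must plant the labels of $\mathcal{R}$ at consistent positions. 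This needs care: which pair and partition a match occupies depends on $\mathit{Db}$'s layout. I would let the simulator choose the partitions at random and argue that partition assignment is a uniform permutation hidden by the uniform $b$-values under $R$. The simulator samples every non-matching pair's values as fresh uniform elements.

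\textbf{The soundness term, and the main obstacle.} H4 yields an identically distributed view. However, Definition~\ref{def:cstpsi-security} compares the joint distribution including $y_R$. In the real world, $y_R$ is computed by the protocol and may include an off-curve value on a spurious accept. In the ideal world, $y_R=\mathcal{R}$. Define $\mathsf{Bad}$ as the event that some non-matching $(p,\{i,j\})$ reconstructs $0$ in all $T$ token rounds. Conditioned on $\neg\mathsf{Bad}$, the real output equals $\mathcal{R}$: correctness (Definition~\ref{def:cstpsi-correctness}) handles the true matches, since the reconstruction is exact. The fundamental lemma of game playing then bounds the hop by $\Pr[\mathsf{Bad}]$. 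Theorem~\ref{thm:multi-token}, via Lemma~\ref{lem:far-bound}, gives $\Pr[\mathsf{Bad}]\le\binom{N}{k}n_{\mathit{part}}/F^{T}$, after the PRF switch already paid for in H2.

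I expect the main obstacle to be making H4 rigorous alongside this bad-event step. Two things must be shown together:
\begin{itemize}
\item the uniformity of the off-node values is \emph{jointly} independent across the $T+K$ rounds and across positions, which holds because $a_{t,e}$ is fresh per round and per row;
\item the simulator's planting of the matched shares is consistent with the hidden partition layout.
\end{itemize}
I would first prove joint uniformity for $k=2$ by writing each $f_{t,i,p}(b)$ as a fixed nonzero linear combination of the $a_{t,e}$, with the Lagrange-basis coefficients as the combination weights. Then I would argue that distinct evaluation points give linearly independent combinations, and lift the argument to general $k$ afterward. Summing the hop costs gives the stated gap.
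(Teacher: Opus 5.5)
Your skeleton is the paper's: $\mathsf{Sim}_S$ from the GC simulator plus BFV IND-CPA, a $\mathsf{Sim}_R$ that plants matched shares and fills the rest with uniform values, and the RSE charged as the one event on which real and ideal outputs differ. Two of your changes are improvements: the identical-until-bad phrasing, and H3, because $R$ can decrypt those ciphertexts, so noise flooding rather than the paper's IND-CPA appeal is the right tool. But both steps you flag as the obstacle fail as planned. Take \emph{planting} first. In the real view, every pair of positions on which a matched row agrees with $Y$ hits the token in all $T$ rounds, so $R$ learns that row's partition and its exact set of agreeing positions. These are functions of $(\mathit{Db},Y)$ that $\mathcal{R}$ does not determine, and Functionality~\ref{def:F-cstpsi} explicitly excludes per-component leakage. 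So no $\mathsf{Sim}_R$ given only $(Y,\mathcal{R})$ can place the planted shares correctly against a distinguisher that knows the inputs. Randomizing partitions would require the protocol to assign rows independently of $\mathit{Db}$'s layout, which Figure~\ref{fig:cstpsi-protocol} does not specify and the $b$-values do not supply. It also does nothing for the positions. The paper's $\mathsf{Sim}_R$ gets past this only by having $\mathcal{F}_{\mathrm{CSTPSI}}$ ``report a match at $(p,\{i,j\})$''. In effect it simulates a functionality augmented with leakage item (iv) of \S\ref{sec:sec-leak}, and you must make that augmentation explicit.

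\emph{H4} is not an identity either, and the linear-independence route fails. Write $v_t\in\mathbb{F}_F^{N}$ for the values $R$ decrypts for partition $p$ in round $t$. With $k=2$ each row has one fresh coefficient per round, so a token round gives $v_t=D\Lambda_p a_{t,p}$. Here $D=\mathrm{diag}(1,\dots,N)$, $a_{t,p}\in\mathbb{F}_F^{s_{\mathit{part}}}$ is uniform, and $\Lambda_p$ is the $N\times s_{\mathit{part}}$ matrix of Lagrange weights $L^{(i)}_e(b_i)$, fixed for the session. So you have $N$ interpolants sharing one coefficient vector, not one polynomial evaluated at distinct points. Conditioned on the $b$'s, as your argument runs, $v_t$ is the image of $s_{\mathit{part}}$ uniform elements, hence not uniform at $N=64>s_{\mathit{part}}=32$. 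Averaging over the $b$'s does not rescue it, since $\Lambda_p$ is shared by all rounds. A label round gives $\Lambda_p s_{t,p}+D\Lambda_p a_{t,p}$, and the rows of $\Lambda_p$ sum to $1$. So if all labels are equal, with chunk $c_t$, every $v_t-c_t\mathbf{1}$ lies in one subspace of dimension at most $s_{\mathit{part}}$. Once $T+K>s_{\mathit{part}}$, a distinguisher knowing $\mathit{Db}$ sees a rank deficiency that independent uniform samples almost never show. Closing H4 therefore needs a hidden-subspace argument about $\Lambda_p$, which can hold at best for $T+K\le s_{\mathit{part}}$. Neither Shamir's threshold property nor Lemma~\ref{lem:far-bound} supplies it, and the paper only argues per-pair marginals, explicitly setting these correlations aside. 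Finally, your node-aligned case is strictly larger than the true-match case. Since $b=\mathrm{AES}_\kappa(\cdot)\bmod F$ with $F\approx 2^{23}$, distinct items collide with probability about $1/F$ even after H2. A row agreeing with $Y$ on one position becomes aligned on a second with probability about $(N-1)/F$. It then reconstructs the token in every round, because the $b$'s are fixed for the session, and releases its genuine label. This event does not shrink with $T$. Even for queries sharing no item with any row, it has probability about $D\binom{N}{2}/F^{2}$, roughly $s_{\mathit{part}}$ times the cited term at $T=2$. So either $\mathsf{Bad}$ contains this event, and then Lemma~\ref{lem:far-bound} no longer bounds $\Pr[\mathsf{Bad}]$, or $\neg\mathsf{Bad}$ no longer implies that the real output is $\mathcal{R}$. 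The lemma's proof has the same hole: it equates item disagreement with $b_{i_j}\neq b_{e,i_j}$.
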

 }
 \vspace{-1em}
\end{figure}

The simulator gap splits into two parts. The first is a
\emph{soundness} contribution, the RSE mass
\(\binom{N}{k}\,n_{\mathit{part}}/F^{T}\). On a spurious accept the real
protocol deviates from \(\mathcal{F}_{\mathrm{CSTPSI}}\) by accepting a
query that shares fewer than \(k\) items and returning an off-curve
value the ideal functionality would never produce. The second is a
negligible \emph{privacy} contribution, the HE, \mbox{GC}, and
AES-PRF transcript-indistinguishability terms. Because the off-curve
value is independent of the stored records (\S\ref{sec:far-problem}),
it leaks no sender data, so the gap's only operator-controlled term is
a soundness quantity, not a privacy leak.

\begin{proof}[Proof sketch]
Sender privacy (Appendix~\ref{app:sec-sender}): the simulator
\(\mathsf{Sim}_R\) populates non-match reconstructions uniformly in
\(\mathbb{F}_F\); the only event distinguishing it from the real view
is the soundness (RSE) event of Lemma~\ref{lem:far-bound}, bounded by
\(\binom{N}{k}\,n_{\mathit{part}}/F^{T}\) plus the \mbox{AES}-PRF
hybrid cost.  Although Lemma~\ref{lem:far-bound} is stated for a query
that matches no entry, its proof bounds each non-matching entry
independently, so the same union bound covers the non-matching pairs of
an arbitrary query.  Receiver privacy
(Appendix~\ref{app:sec-receiver}): the sender's view comprises one
\mbox{GC} transcript and one ciphertext bundle of encrypted query
powers, both simulable from \((\mathit{Db}, |Y|)\) under the HE
semantic security and the standard semi-honest \mbox{GC} simulator.
Propositions~\ref{prop:multi-round} and~\ref{prop:1-gc}
(Appendix~\ref{app:sec-rounds}) ensure the bound carries across the
\(T+K\) homomorphic rounds and the once-per-session \mbox{GC}
optimization.
\qed
\end{proof}

\paragraph{Scope of the guarantee.}
The simulator gap above is the difference between the real protocol and
the ideal functionality. A natural match, where two unrelated records
share at least $k$ items by chance, is returned by the ideal
functionality as well. It therefore appears in both worlds and cancels
in this gap. What remains is the protocol's RSE on
queries that share fewer than $k$ items, which the bound drives to a
negligible level. The matcher's intrinsic false-match rate (FMR) is a
property of the predicate, not of the protocol, so it is out of scope
here and is the same as in plaintext.

 \vspace{-0.8em}
\subsection{Multi-Token-Round Amplification}\label{sec:sec-amplification}

The bound of Lemma~\ref{lem:far-bound} (§\ref{sec:far-problem}) gives the
single-round $T = 1$ probability that $\Pi_T$ accepts a non-matching
query.  We now show that the amplification
mechanism the construction admits, $T$ independent token rounds with
independently sampled Shamir coefficients per round, drives this bound
exponentially in $T$, with no degradation of the false reject rate.

\begin{figure}[h]
  \center
  \resizebox{0.95\width}{!}{
\begin{theorem}[Multi-Token Amplification]\label{thm:multi-token}
For $T \ge 2$ with independently sampled Shamir coefficients in each token
round and a cryptographically secure source of randomness for those samples,
the RSE of
Lemma~\ref{lem:far-bound} satisfies
\begin{equation}
\Pr[\,\mathrm{accept}\,] \;\le\; \binom{N}{k}\,n_{\mathit{part}}\,\big/\,F^{T}
+ T \cdot \mathsf{Adv}^{\mathrm{prf}}_{\mathrm{AES}}(\mathcal{B}).
\label{eq:far-T-thm}
\end{equation}
Moreover, the false reject rate of $\Pi_T$ matches that of $\Pi_1$:
every true match lies in $\bigcap_{t} S_{t}$ deterministically,
where $S_t$ denotes the set of pairs $(i,j)$ whose round-$t$
reconstruction $\mathsf{KR}(i,j,t)$ equals the public token $0$.
Concrete parameter thresholds are stated in
Corollary~\ref{cor:sufficient-T}.
\end{theorem}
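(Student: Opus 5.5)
The plan is to reduce the theorem to the per-round analysis already carried out in the proof of Lemma~\ref{lem:far-bound}, and then add two things: an explicit independence argument across the $T$ token rounds, and a separate deterministic argument for the false-reject clause.

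\textbf{Step 1: PRF hybrid.} First I would replace $\mathrm{AES}_\kappa$ by a truly random function $R$, exactly as in Lemma~\ref{lem:far-bound}. This costs at most $T\cdot\mathsf{Adv}^{\mathrm{prf}}_{\mathrm{AES}}(\mathcal{B})$. From then on I condition on the blinded values $b_j$ and $b_{e,i}$.

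\textbf{Step 2: fix one trial.} Fix a partition $p$ and a pair $I=\{i,j\}$ such that no entry of $p$ agrees with $Y$ on both positions. For round $t$, write $Z_t=\mathsf{KR}(i,j,t)$. By the lemma's argument, $Z_t$ is an affine function of the values $f_{t,i,p}(b_i)$ and $f_{t,j,p}(b_j)$. Each of these is a fixed linear combination of the ordinates $\{P_{t,e}(\cdot)\}_{e\in p}$, and hence a linear function of the coefficients $(a_{t,e})_{e\in p}$. Because $I$ contains a non-node, this linear function has a nonzero coefficient on some $a_{t,e}$, so $Z_t$ is uniform on $\mathbb{F}_F$.

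\textbf{Step 3: independence across rounds.} The key observation is that the Lagrange weights and the interpolation coefficients depend only on the blinded coordinates. These are fixed once per session by the single GC-OPRF, so they are the same linear map $\phi$ in every round. Therefore $Z_t=\phi\big((a_{t,e})_{e\in p}\big)$ with one common $\phi$. Since the vectors $(a_{t,e})_e$ for $t\in[0,T)$ are independent by hypothesis, the variables $Z_0,\dots,Z_{T-1}$ are independent and each is uniform. This gives
\[
\Pr\Big[\textstyle\bigwedge_{t<T} Z_t=0\Big]=F^{-T}.
\]
Two facts make this work. The token rounds use secret $0$ and do not involve labels, and the label rounds $t\ge T$ are not consulted for acceptance. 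A union bound over the $\binom{N}{k}n_{\mathit{part}}$ trials then yields~(\ref{eq:far-T-thm}).

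\textbf{Step 4: computational randomness.} The hypothesis of a cryptographically secure randomness source can be handled by one further hybrid. In that hybrid, the pseudorandom coefficients are replaced by truly random ones, and the resulting loss is folded into $\mathsf{negl}(\lambda)$.

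\textbf{Step 5: false-reject clause.} This part is deterministic. Suppose row $e\in p$ matches on positions $i$ and $j$, meaning $b_i=b_{e,i}$ and $b_j=b_{e,j}$. By interpolation, $f_{t,i,p}(b_i)=P_{t,e}(i)$ and $f_{t,j,p}(b_j)=P_{t,e}(j)$. Two-point Lagrange interpolation at $0$ then returns $s_{t,e}=0$ for every $t<T$. Hence $(i,j)\in\bigcap_t S_t$, and the token filter of $\Pi_T$ rejects no true match that $\Pi_1$ accepts.

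\textbf{Main obstacle.} I expect Step 3 to be the hardest, because the BSGS/SIMD packing and the modular reductions ($0\mapsto 1$, the $\bmod F$ bias) could break the claimed exact uniformity. My first approach would be to argue at the plaintext level: decryption is correct, and noise flooding makes the returned ciphertexts depend only on the plaintexts. The small biases would be absorbed as in the lemma's proof.
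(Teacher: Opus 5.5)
Your outline follows the paper's sketch: per-round uniformity as in Lemma~\ref{lem:far-bound}, independence across the $T$ token rounds from fresh Shamir coefficients, a union bound, and the deterministic constant-term-$0$ argument for false rejects. Your Step~5 matches the paper's argument. The gap is in Step~2, and your own Step~3 is what makes it matter.

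Write $L^{(i)}_e$ for the Lagrange basis polynomial of node $b_{e,i}$ among $\{b_{e',i}\}_{e'\in p}$. A token-round reconstruction with abscissae $i,j$ (your Step~5 convention) is then
\[
Z_t=\frac{ij}{j-i}\sum_{e\in p}a_{t,e}\bigl(L^{(i)}_e(b_i)-L^{(j)}_e(b_j)\bigr).
\]
So your $\phi_{p,I}$ is nonzero exactly when the vectors $(L^{(i)}_e(b_i))_{e\in p}$ and $(L^{(j)}_e(b_j))_{e\in p}$ differ. When $|p|\ge 2$ this holds if exactly one of $b_i,b_j$ is a node, or if they are nodes of different rows. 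Nothing excludes equality when both are non-nodes, which is the typical case for a non-matching query. A non-node does make $f_{t,i,p}(b_i)$ uniform. But $f_{t,j,p}(b_j)$ is a linear form in the same coefficients $(a_{t,e})_{e\in p}$, so the two can cancel. Hence ``$I$ contains a non-node'' does not give you a nonzero coefficient.

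As you observe in Step~3, $\phi_{p,I}$ is fixed by the session's blinded values and is the same in every round. So the event $\phi_{p,I}=0$ forces $Z_0=\dots=Z_{T-1}=0$ simultaneously and is not amplified. The correct per-trial bound is $\Pr_R[\phi_{p,I}=0]+F^{-T}$.

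This is not a vacuous technicality. Take a two-row partition $p=\{e_1,e_2\}$. Then
\[
Z_t=\frac{ij}{j-i}\,(a_{t,e_1}-a_{t,e_2})\bigl(L^{(i)}_{e_1}(b_i)-L^{(j)}_{e_1}(b_j)\bigr).
\]
For each non-node $b_i$, exactly one non-node $b_j$ makes the last factor vanish. Under the random-function hybrid, a non-matching pair therefore passes all $T$ token rounds with probability about $1/F$, whatever $T$ is.

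To reach (\ref{eq:far-T-thm}) you must bound $\Pr_R[\phi_{p,I}=0]$ separately. It is plausibly far below $F^{-T}$ for full partitions of $s_{\mathit{part}}=32$ rows with distinct items, but that needs an argument and extra hypotheses. You then either show it is dominated by $F^{-T}$ or carry it as a $T$-independent term. The paper's sketch has the same hole: its Lemma treats the uniform non-node evaluation as independent of the other input to $\mathsf{KR}$.

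The obstacle you anticipated (packing and the $\bmod F$ bias) is not where the difficulty lies.
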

 }
 \vspace{-0.5em}
\end{figure}

\begin{proof}[Proof sketch]
The bound follows immediately from Lemma~\ref{lem:far-bound} once
across-round independence is established.  Independence of the $T$
per-round reconstructions reduces to independence of the Shamir coefficient
samples; in the reference implementation of §\ref{sec:bb-construction},
this requires that each call for creating a secret share draw its
coefficients from a cryptographically secure
source.\footnote{In the reference implementation each thread's
Shamir-coefficient generator is seeded from the operating-system
\mbox{CSPRNG} (\texttt{arc4random\_buf} on macOS/BSD,
\texttt{/dev/urandom} on Linux), replacing the additive-feedback
\texttt{random()} that an earlier prototype used.}  False-reject preservation:
at a true match $(i,j)$ within an enrolled entry $e$, the receiver tries the
same pair $(i,j)$ in every round $t$; the sender's per-entry Shamir
polynomial $P_{t,e}$ is sampled independently per round, but in every round
it has constant term $0$, so the $\mathsf{KR}$ interpolation at $x = 0$
through the points $(b_{e,i}, P_{t,e}(i))$ and $(b_{e,j}, P_{t,e}(j))$
deterministically yields $0$.  Hence $(i,j) \in S_{t}$ for all $t$, and
$(i,j) \in \bigcap_{t} S_{t}$, with no probability of failure.
\qed
\end{proof}

The simulator gap of Theorem~\ref{thm:main-sec} is thus a tunable
parameter, not a fixed weakness.  Its only operator-controlled term,
$\binom{N}{k}\,n_{\mathit{part}}/F^{T}$, is the soundness term of
the composition of spurious accepts; by Corollary~\ref{cor:sufficient-T} an
extra token round drives it below $2^{-\lambda}$ at the target
scale, where it is dominated by the $\mathsf{negl}(\lambda)$ HE and
\mbox{GC} terms and the AES-\mbox{PRF} cost.  A non-negligible gap
at $T = 1$ is therefore an artifact of running below the intended
security parameter, not an inherent limitation.

The sender-privacy simulator construction in
Appendix~\ref{app:sec-sender} uses Theorem~\ref{thm:multi-token} as
its quantitative bound on this soundness term; together with the
receiver-privacy argument and the multi-round propositions of
Appendix~\ref{app:security-proof}, this yields the bound of
Theorem~\ref{thm:main-sec}.

\begin{figure}
  \center
  \resizebox{0.95\width}{!}{
\begin{corollary}[Sufficient Number of Token Rounds]\label{cor:sufficient-T}
For RSE at most $2^{-\lambda}$ with security
parameter $\lambda$, it suffices to choose
\begin{equation}
T \;\ge\; \left\lceil\frac{\lambda \ln 2 + \ln \binom{N}{k}
+ \ln \lceil D/s_{\mathit{part}}\rceil}{\ln F}\right\rceil.
\label{eq:T-suff}
\end{equation}
At the CSTPSI parameters $(N, k, F, s_{\mathit{part}})
= (64, 2, 8\,519\,681, 32)$, $T = 2$ suffices at million-scale
databases under the engineering RSE threshold
($\lambda \approx 20$); $T = 3$ suffices to million-scale at the
cryptographic standard $\lambda = 40$ and to billion-scale under the
engineering threshold; and billion-scale at $\lambda = 40$ needs
$T = 4$.  The \mbox{PRF} advantage term contributes an additive
$T \cdot 2^{-128}$ that does not affect any of these thresholds.
\end{corollary}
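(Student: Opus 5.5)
The plan is to solve the dominant term of the bound in Theorem~\ref{thm:multi-token} for $T$, and then evaluate the resulting threshold at the four $(D,\lambda)$ regimes named in the statement. Since Theorem~\ref{thm:multi-token} (via Lemma~\ref{lem:far-bound}) bounds the RSE by $\binom{N}{k}\,n_{\mathit{part}}/F^{T} + T\cdot\mathsf{Adv}^{\mathrm{prf}}_{\mathrm{AES}}(\mathcal{B})$, I will first treat the combinatorial term alone. I will impose
\[
\binom{N}{k}\,\lceil D/s_{\mathit{part}}\rceil\,F^{-T} \le 2^{-\lambda}.
\]
Both sides are positive and $\ln F>0$, so taking natural logarithms turns this into the equivalent linear condition
\[
T\ln F \ge \lambda\ln 2 + \ln\binom{N}{k} + \ln\lceil D/s_{\mathit{part}}\rceil .
\]
Because $T$ is an integer, this condition holds exactly when $T$ is at least the ceiling in~\eqref{eq:T-suff}. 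This establishes sufficiency. It also shows that the bound is essentially tight for the first-order term, since any smaller integer $T$ violates the inequality.

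For the concrete claims I will substitute the stated parameters into~\eqref{eq:T-suff}.
\begin{itemize}
\item The fixed terms are $\ln\binom{64}{2}=\ln 2016\approx 7.61$ and $\ln F=\ln 8\,519\,681\approx 15.96$.
\item At $D=10^6$ we have $n_{\mathit{part}}=31\,250$, with $\ln\approx 10.35$. At $D=10^9$ we have $n_{\mathit{part}}=31\,250\,000$, with $\ln\approx 17.26$.
\item The security terms are $\lambda\ln 2\approx 13.86$ for $\lambda=20$ and $\approx 27.73$ for $\lambda=40$.
\end{itemize}
The four ratios are then as follows:
\begin{itemize}
\item million-scale at $\lambda=20$: $\approx 31.82/15.96\approx 1.99$, so $T=2$;
\item million-scale at $\lambda=40$: $\approx 45.69/15.96\approx 2.86$, so $T=3$;
\item billion-scale at $\lambda=20$: $\approx 38.73/15.96\approx 2.43$, so $T=3$;
\item billion-scale at $\lambda=40$: $\approx 52.60/15.96\approx 3.30$, so $T=4$.
\end{itemize}
These match each claim of the corollary. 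I will also note that $\lambda\approx 20$ corresponds to the $10^{-6}\approx 2^{-20}$ engineering threshold.

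The step I expect to need the most care is the additive PRF term. Strictly, the full RSE is at most $2^{-\lambda}+T\cdot 2^{-128}$, not $2^{-\lambda}$. I will handle this in one of two ways.
\begin{itemize}
\item \emph{Statement level.} Read the corollary as controlling the operator-tunable term, and note that $T\cdot 2^{-128}$ with $T\le 4$ is below $2^{-125}$. This is many orders of magnitude under $2^{-40}$.
\item \emph{Threshold level.} Show that the conclusions survive the slack. Replace $2^{-\lambda}$ by $2^{-\lambda}-T\,2^{-128}$, which only adds a term of order $2^{-128+\lambda}$ inside the logarithm. This cannot move any ceiling unless a ratio lies within roughly $10^{-25}$ of an integer.
\end{itemize}
The only ratio close to an integer is the million-scale, $\lambda=20$ case at $\approx 1.99$. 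Here I would double-check the logarithms with enough precision: its distance below $2$ is about $0.1/15.96\approx 0.006$, which is comfortably larger than any rounding error. That confirms $T=2$ rather than $T=3$ in that case.
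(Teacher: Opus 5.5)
Your proposal is correct and takes the route the paper intends. The corollary is an immediate consequence of Theorem~\ref{thm:multi-token}: require $\binom{N}{k}\lceil D/s_{\mathit{part}}\rceil F^{-T}\le 2^{-\lambda}$, take logarithms, and round up to an integer; your four ratios ($\approx 1.99$, $2.86$, $2.43$, $3.30$) check out, and the paper gives no separate proof, so your explicit treatment of the additive $T\cdot 2^{-128}$ term is an addition rather than a departure.
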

}
\vspace{-1em}
\end{figure}

\subsection{Leakage Profile}\label{sec:sec-leak}

\paragraph{Receiver learns} (i) the labels of true matches, as
required by \(\mathcal{F}_{\mathrm{CSTPSI}}\); (ii) the approximate
database size via the number of \mbox{SIMD} partitions transmitted
(low severity; this parameter may be public); (iii) at \(T = 1\),
spurious accepts that return off-curve values at the rate of
Lemma~\ref{lem:far-bound}, the soundness term of
Theorem~\ref{thm:main-sec} (driven to negligible at \(T \ge 2\)); and (iv) on each
matched partition, the count and the identities of the agreeing
component positions (revealed only when the count reaches \(k\)
and a label is recovered).

The component-position leakage of (iv) is inherent to the
set-threshold reduction and is shared by every protocol in this
class; what distinguishes \mbox{CSTPSI} is that the multi-token
amplification confines it to genuine matches.  At \(T = 1\) the same
indices leak on each spurious accept, that is, for a
\(\binom{N}{k}\,n_{\mathit{part}}/F\) fraction of non-matching rows;
driving that rate to negligible at \(T \ge 2\) is precisely what
restricts the leakage to true matches.

\paragraph{Sender learns} nothing of cryptographic value.  The
constant \mbox{GC} online cost eliminates the
label-size timing side channel that a naive multi-round-\mbox{GC}
composition would exhibit.

\paragraph{Network observer learns} the round count \(T + K\),
from which an observer can infer the chosen label size unless that
parameter is treated as public, and fixed-size message bodies that
carry no item-specific information.

\section{Evaluation}\label{sec:eval}

\begin{table*}[t]
\centering
\caption{Per-query online time (s) and communication (MiB), grouped by
thread count (1, 4, 8). CSTPSI runs at $T = 2$ token rounds, STLPSI at
$T = 1$. Faster and lower values are bolded; (-) marks cases where CSTPSI
sends more than the baseline.}
\label{tab:headtohead}
\small
\setlength{\tabcolsep}{4pt}
\renewcommand{\arraystretch}{1.0}

\begin{tabular}{ll ccc ccc ccc ccc}
\toprule
& & \multicolumn{3}{c}{\textbf{Online Time, 1\,thr (s)}} & \multicolumn{3}{c}{\textbf{Online Time, 4\,thr (s)}} & \multicolumn{3}{c}{\textbf{Online Time, 8\,thr (s)}} & \multicolumn{3}{c}{\textbf{Comm. (MiB)}} \\
\cmidrule(lr){3-5} \cmidrule(lr){6-8} \cmidrule(lr){9-11} \cmidrule(lr){12-14}
$D$ & Label & STL & CST & Spd. & STL & CST & Spd. & STL & CST & Spd. & STL & CST & Save \\
\midrule

1K
& 23-bit & 3.3 & \textbf{1.7} & 1.9$\times$ & 3.3 & \textbf{1.7} & 2.0$\times$ & 3.4 & \textbf{1.7} & 2.0$\times$ & 6 & \textbf{3} & 50\% \\
& 16-B   & 11.3 & \textbf{1.7} & 6.5$\times$ & 11.5 & \textbf{1.7} & 6.7$\times$ & 11.4 & \textbf{1.7} & 6.7$\times$ & 20 & \textbf{3} & 85\% \\
& 32-B   & 21.3 & \textbf{1.9} & 11.5$\times$ & 21.0 & \textbf{1.7} & 12.0$\times$ & 21.1 & \textbf{1.8} & 12.0$\times$ & 36 & \textbf{4} & 89\% \\
& 64-B   & 38.8 & \textbf{2.0} & 19.4$\times$ & 40.1 & \textbf{1.9} & 21.2$\times$ & 40.1 & \textbf{1.9} & 21.2$\times$ & 67 & \textbf{5} & 93\% \\
\midrule

10K
& 23-bit & 3.4 & \textbf{1.9} & 1.8$\times$ & 3.3 & \textbf{1.7} & 1.9$\times$ & 3.5 & \textbf{1.8} & 1.9$\times$ & 6 & \textbf{4} & 33\% \\
& 16-B   & 11.8 & \textbf{2.2} & 5.3$\times$ & 11.8 & \textbf{1.9} & 6.2$\times$ & 11.5 & \textbf{1.8} & 6.4$\times$ & 22 & \textbf{6} & 73\% \\
& 32-B   & 21.9 & \textbf{2.7} & 8.1$\times$ & 21.9 & \textbf{2.1} & 10.6$\times$ & 21.5 & \textbf{1.9} & 11.1$\times$ & 41 & \textbf{9} & 78\% \\
& 64-B   & 40.6 & \textbf{3.5} & 11.5$\times$ & 41.2 & \textbf{2.4} & 17.3$\times$ & 39.7 & \textbf{2.2} & 18.4$\times$ & 75 & \textbf{14} & 81\% \\
\midrule

100K
& 23-bit & 4.9 & \textbf{4.0} & 1.2$\times$ & 3.9 & \textbf{2.4} & 1.6$\times$ & 3.8 & \textbf{2.3} & 1.7$\times$ & \textbf{14} & 16 & -  \\
& 16-B   & 16.6 & \textbf{7.7} & 2.2$\times$ & 12.9 & \textbf{3.5} & 3.7$\times$ & 12.7 & \textbf{3.0} & 4.2$\times$ & 49 & \textbf{37} & 24\% \\
& 32-B   & 31.4 & \textbf{12.1} & 2.6$\times$ & 24.8 & \textbf{4.9} & 5.0$\times$ & 23.2 & \textbf{3.9} & 5.9$\times$ & 91 & \textbf{63} & 31\% \\
& 64-B   & 56.7 & \textbf{20.0} & 2.8$\times$ & 44.7 & \textbf{7.4} & 6.0$\times$ & 45.6 & \textbf{5.9} & 7.8$\times$ & 169 & \textbf{111} & 34\% \\
\midrule

1M
& 23-bit & \textbf{19.6} & 25.9 & 0.8$\times$ & \textbf{8.3} & 9.1 & 0.9$\times$ & \textbf{6.7} & 6.8 & 1.0$\times$ & \textbf{92} & 132 & - \\
& 16-B   & 65.3 & \textbf{62.7} & 1.0$\times$ & 28.3 & \textbf{20.5} & 1.4$\times$ & 23.0 & \textbf{14.6} & 1.6$\times$ & \textbf{322} & 349 & - \\
& 32-B   & 119.9 & \textbf{106.9} & 1.1$\times$ & 52.1 & \textbf{34.1} & 1.5$\times$ & 41.9 & \textbf{23.9} & 1.8$\times$ & \textbf{597} & 608 & -\\
& 64-B   & 219.7 & \textbf{187.8} & 1.2$\times$ & 95.6 & \textbf{59.1} & 1.6$\times$ & 79.9 & \textbf{42.2} & 1.9$\times$ & 1103 & \textbf{1084} & 2\% \\
\bottomrule
\end{tabular}
\vspace{-1.5em}
\end{table*}

\subsection{Experimental Setup}\label{sec:eval-setup}

We implement CSTPSI in C++ from scratch; the source and a one-command
reproduction harness are publicly available under the PolyForm
Noncommercial License.\footnote{\url{https://github.com/euzun/cstpsi}}
All experiments run on a single workstation with a 12-core
(6 performance, 6 efficiency) processor and $36$\,GB of memory under a
current desktop operating system.
The sender and receiver run as separate processes communicating over
the loopback interface with \mbox{ZeroMQ}~\cite{zeromq}; network
conditions are simulated at $10$\,Gbps bandwidth and $0.02$\,ms
latency.
Each case uses $300$ true-positive and $300$ true-negative query draws,
reduced to $100$ of each for the heaviest configurations ($32$- and
$64$-byte labels and all $D = 1$M cases); timing, communication, and
memory are reported as the median over independent runs.
The cryptographic stack is \mbox{SEAL}~\cite{seal} (BFV at
$\approx\!128$-bit security) for the homomorphic layer,
\mbox{EMP-toolkit}~\cite{wang2017emp}
for the AES-128 oblivious-PRF garbled circuit, and \mbox{FLINT}~\cite{flint}
with GMP and MPFR for polynomial arithmetic; threading uses OpenMP.

We use two data regimes. The synthetic regime
(Sections~\ref{sec:eval-far} and~\ref{sec:eval-perf}) realizes a
perfect matcher with zero false-match rate. Every true-negative query
is built to share fewer than $k$ items with every enrolled record, so
no natural match can occur and every accept is purely the kernel's
realization soundness error. This isolates the RSE and shows clearly
how it scales with the database size. The real regime
(Section~\ref{sec:eval-real}) uses two public datasets at different
scales, LFW~\cite{huang2008labeled} and Deep1B~\cite{babenko2016efficient}, to demonstrate the attack on a real matcher and
to show that CSTPSI restores the RSE to zero.

\subsection{Benchmarked Protocols and Headline Result}\label{sec:eval-protocols}

We also implemented the STLPSI protocol of Uzun et
al.~\cite{uzun2021fuzzy} from scratch and exposed it within the same
tool as an alternative mode, so that both protocols run on identical
code paths and hardware. The two modes are:
\begin{itemize}
  \item \textbf{CSTPSI}: the deployment configuration, which evaluates
        the garbled circuit once per session, transmits the encrypted
        query once, and applies the multi-token-round mitigation
        that drives the realization soundness error (RSE) to its
        amplified bound.
  \item \textbf{STLPSI}: the baseline, with a single token round; the oblivious-PRF garbled circuit and the encrypted
        query are recomputed and retransmitted on every protocol round.
\end{itemize}

Both modes share the standard labeled-PSI optimizations
(partitioning, \mbox{SIMD} batching, baby-step--giant-step
query-power windowing), differing only in whether the garbled
circuit and encrypted query are cached across rounds (CSTPSI) or
recomputed each round (STLPSI); the reported speedups thus reflect
this protocol-level caching, not an unoptimized baseline.

The protocol parameters and the swept ranges are collected in
Table~\ref{tab:notation} (\S\ref{sec:overview}); the security level is
$\approx\!128$ bits.

\paragraph{Headline result}
At a million records ($D = 1$M, $8$ threads) the no-caching STLPSI
baseline ($T = 1$) has an RSE of one, admitting a spurious accept on
every spoofing query, while CSTPSI ($T = 2$) admits none. CSTPSI buys
this soundness essentially for free. Across the tested range it matches
or beats the baseline, running more than $20\times$ faster and sending
up to $93\%$ less data at best (at $D = 1$K with a $64$-byte label) and
staying within about $2\%$ of it at worst.
Corollary~\ref{cor:sufficient-T} projects an engineering-negligible RSE
(${\leq}10^{-6}$) through a million records, which the run confirms.

\begin{figure}[!h]
  \centering
  \vspace{-0.8em}
  \begin{tikzpicture}[x=1cm,y=1cm]
    \foreach \xx/\xl in {0/{$10^3$},2/{$10^4$},4/{$10^5$},6/{$10^6$}}{
      \draw[gray!18] (\xx,0)--(\xx,4); \node[font=\scriptsize,below] at (\xx,0){\xl};}
    \foreach \yy/\yl in {0/{$10^{-3}$},1.333/{$10^{-2}$},2.667/{$10^{-1}$},4/{$1$}}{
      \draw[gray!18] (0,\yy)--(6,\yy); \node[font=\scriptsize,left] at (0,\yy){\yl};}
    \draw[black!60] (0,0) rectangle (6,4);
    \draw[cLem,very thick] plot[smooth] coordinates
      {(0,1.165)(0.954,1.796)(2,2.479)(2.954,3.072)(4,3.630)(4.954,3.935)(6,4)};
    \draw[gray,thick,dashed] (0,1.167)--(4.25,4);
    \foreach \px/\py in {2/2.204,4/3.667,6/3.96}{\fill[cThm] (\px,\py) circle (2.2pt);}
    \node[font=\small,below=5mm] at (3,0){Database size $D$};
    \node[font=\small,rotate=90] at (-1.05,2){RSE ($T{=}1$)};
    \begin{scope}[shift={(0.3,3.5)},font=\scriptsize]
      \draw[cLem,very thick] (0,0)--(0.45,0); \node[right] at (0.45,0){exact bound};
      \draw[gray,thick,dashed] (0,-0.38)--(0.45,-0.38); \node[right] at (0.45,-0.38){linear approx.};
      \fill[cThm] (0.225,-0.76) circle (2.2pt); \node[right] at (0.45,-0.76){measured};
    \end{scope}
    \node[font=\scriptsize,align=center] at (4.25,0.7)
      {$T{\geq}2$: no spurious accept\\(bound $10^{-10}$ to $10^{-7}$)};
  \end{tikzpicture}
  \vspace{-0.5em}
  \caption{Empirical RSE versus database size $D$ (log--log axes).}
  \label{fig:far-empirical}
  \vspace{-1.7em}
\end{figure}

\subsection{Correctness and Realization Soundness Error}\label{sec:eval-far}

We measure the protocol's RSE, not the matcher FMR. Every true-negative
query is drawn to share fewer than $k$ items with every enrolled record,
so any accept is a protocol artifact and not a natural match; the
intrinsic FMR is a plaintext property of the representation, reported as
the FMR floor in \S\ref{sec:eval-real}.

Fig.~\ref{fig:far-empirical} plots the measured $T = 1$ RSE against the
bound $\binom{N}{k}\,n_{\mathit{part}}/F^{T}$ of Lemma~\ref{lem:far-bound}.
It grows with $D$ as predicted, from under $1\%$ at $1$K through $4.5\%$ at
$10$K and $56\%$ at $100$K to $100\%$ at $1$M, tracking the exact bound;
the linear approximation overestimates above $D \approx 10^{4}$. The $1$K
point is omitted from the log plot, as no spurious accept was observed in
its sample (consistent with the ${\sim}1\%$ bound). A single token round
removes the error: no spurious accept was observed at $T \geq 2$ through
$D = 1$M, where the bound falls from $10^{-10}$ at $1$K to $10^{-7}$ at
$1$M, and Corollary~\ref{cor:sufficient-T} gives the sufficient $T$ for any
target scale. Every true-positive query is accepted (false-reject rate $0$
throughout), so the fix costs no recall.

\subsection{Soundness on Real Datasets}\label{sec:eval-real}

The synthetic study isolates the kernel's RSE; we now confirm the same
effect on real data at two scales. Deep1B carries the scale story up to
a million records, and LFW checks end-to-end accuracy on a real
biometric matcher. We measure the per-query RSE as the fraction
of queries with at least one spurious accept. Such an accept returns an
off-curve value, not an enrolled label (\S\ref{sec:far-problem}), so this
fraction measures the composition gap directly.

\paragraph{Datasets and setup}
Deep1B~\cite{babenko2016efficient} includes 96-dimensional deep object
descriptors. We enroll $D$ records up to $10^{6}$ and issue, per repeat,
$100$ genuine queries (each close to one enrolled record and in the published query vectors) and $100$
random queries (no planted similarity), over $10$ repeats.
LFW~\cite{huang2008labeled} is a widely used dataset for face-recognition accuracy. It supplies $12{,}209$ face embeddings; we use the $128$-dimensional vectors. We built a lightweight FLPSI wrapper that consumes these embedding vectors and reduces them to $N$-item sets, with the exact parameters and conversion flow prescribed in FLPSI'21~\cite{uzun2021fuzzy}. The wrapper and the input embeddings for this experiment are in our released codebase.
We enroll $D = 1500$ records and issue, per repeat, a genuine query for
each enrolled record and $100$ impostor queries, over $10$ repeats. Both
datasets pass through the same locality-sensitive hash to a $256$-bit
sketch and use $T = 2$ for CSTPSI.

\begin{table}[h]
\centering
\caption{Deep1B kernel soundness vs.\ DB size: the baseline RSE
climbs to one. CSTPSI stays at zero (FRR identical).}
\label{tab:deep1b}
\small
\setlength{\tabcolsep}{6pt}
\begin{tabular}{l cccc}
\toprule
$D$ & $10^{3}$ & $10^{4}$ & $10^{5}$ & $10^{6}$ \\
\midrule
$k$-of-$N$ FRR        & 0.112 & 0.112 & 0.112 & 0.112 \\
STLPSI RSE ($T{=}1$)  & 0.010 & 0.080 & 0.504 & 0.999 \\
CSTPSI RSE ($T{=}2$)  & \textbf{0.000} & \textbf{0.000} & \textbf{0.000} & \textbf{0.000} \\
\bottomrule
\end{tabular}
\vspace{-0.6em}
\end{table}

\paragraph{Deep1B: the RSE approaches one}
The baseline (STLPSI, $T = 1$) RSE climbs with $D$, from $0.010$
at $10^{3}$ to $0.080$, $0.504$, and $0.999$ at $10^{6}$, matching the
synthetic curve of Fig.~\ref{fig:far-empirical}; at a million records
almost every query produces a spurious accept, about seven off-curve
values per query.\footnote{The split counts a reconstruction as a label
when its value falls in $[0,D)$ and as off-curve otherwise. An off-curve
value falls in $[0,D)$ by chance with probability $D/F$, which is
$11.7\%$ at $D = 10^{6}$, so the reported count is a lower bound on the
true off-curve rate.} CSTPSI holds the RSE at zero throughout,
and the false-reject rate is identical for the plaintext matcher,
STLPSI, and CSTPSI ($0.112$), so the fix costs no recall. The random and
genuine queries behave alike, confirming the effect follows from the
database size, not the data, as Lemma~\ref{lem:far-bound} predicts.

The matcher's own false matches run far higher. Its FMR floor rises from
about $0.95$ to $267$ real matches per query as $D$ grows from $10^{3}$
to $10^{6}$, but that floor is a plaintext property of the
representation, shared by the matcher, STLPSI, and CSTPSI, and out of
scope here (\S\ref{sec:threat-errors}). What the protocol owns is the
RSE, which STLPSI drives to one and CSTPSI removes entirely. The floor
being the larger number is exactly why the two must be separated. CSTPSI
zeros the protocol's contribution, bringing the deployed system down to
the matcher's floor, which amplification cannot and should not go below
(\S\ref{sec:discussion}).

\begin{table}[t]
\centering
\caption{LFW end-to-end accuracy at the plaintext $k$-of-$N$ operating
point. CSTPSI matches the matcher's false-accept and false-reject rates;
STLPSI inflates the false-accept rate.}
\label{tab:lfw}
\small
\begin{tabular}{l cc}
\toprule
Configuration & FAR & FRR \\
\midrule
$k$-of-$N$ (plaintext)  & $0.337$ & $0.324$ \\
STLPSI ($T{=}1$)        & $0.352$ & $0.324$ \\
CSTPSI ($T{=}2$)        & $0.337$ & $0.324$ \\
\bottomrule
\end{tabular}
\vspace{-1.2em}
\end{table}

\paragraph{LFW: a real biometric matcher}
Table~\ref{tab:lfw} reports end-to-end behavior on faces. LFW is a
stringent test because the plaintext matcher is noisy at this operating
point: at the $k$-of-$N$ threshold it already sits at a high false-accept
rate of $0.337$ and false-reject rate of $0.324$. The test is
faithfulness, not accuracy.
Whatever the plaintext operating point, the kernel must reproduce it.
CSTPSI does: it matches the $k$-of-$N$ false-accept and false-reject
rates ($0.337$ and $0.324$) to the digit, so the cryptographic layer adds
nothing to the matcher's own error. STLPSI, without the soundness fix,
instead pushes the false-accept rate above the matcher floor, to $0.352$.
Deep1B carries the scale story; LFW shows the kernel tracks a real
matcher's operating point under encryption, however noisy that matcher
is.

\begin{figure}[!h]
  \centering
  \vspace{-0.5em}
  \resizebox{0.92\columnwidth}{!}{
  \begin{tikzpicture}[x=1cm,y=1cm]
    \foreach \yy/\yl in {1.065/{$2\times$},2.130/{$5\times$},2.935/{$10\times$},3.741/{$20\times$}}{
      \draw[gray!18] (0,\yy)--(6,\yy); \node[font=\scriptsize,left] at (0,\yy){\yl};}
    \foreach \xx/\xl in {0/{1K},2/{10K},4/{100K},6/{1M}}{
      \draw[gray!12] (\xx,0)--(\xx,4); \node[font=\scriptsize,below] at (\xx,0){\xl};}
    \draw[black!60] (0,0) rectangle (6,4);
    \draw[black!55,dashed,thick] (0,0.259)--(6,0.259);
    \node[font=\scriptsize,fill=white,inner sep=1pt] at (1.5,0.11){$1\times$ parity};
    \draw[cLem,thick] (0,3.809)--(2,3.644)--(4,2.646)--(6,1.005);
    \foreach \x/\y in {0/3.809,2/3.644,4/2.646,6/1.005}{
      \fill[cLem] (\x,\y+0.06)--(\x-0.06,\y)--(\x,\y-0.06)--(\x+0.06,\y)--cycle;}
    \draw[cFunc,thick] (0,3.147)--(2,3.056)--(4,2.322)--(6,0.942);
    \foreach \x/\y in {0/3.147,2/3.056,4/2.322,6/0.942}{
      \fill[cFunc] (\x,\y+0.062)--(\x-0.058,\y-0.045)--(\x+0.058,\y-0.045)--cycle;}
    \draw[cCor,thick] (0,2.470)--(2,2.417)--(4,1.927)--(6,0.806);
    \foreach \x/\y in {0/2.470,2/2.417,4/1.927,6/0.806}{
      \fill[cCor] (\x-0.05,\y-0.05) rectangle (\x+0.05,\y+0.05);}
    \draw[cThm,thick] (0,1.065)--(2,1.005)--(4,0.876)--(6,0.259);
    \foreach \x/\y in {0/1.065,2/1.005,4/0.876,6/0.259}{
      \fill[cThm] (\x,\y) circle (0.055);}
    \node[font=\small,below=5mm] at (3,0){Database size $D$};
    \node[font=\small,rotate=90] at (-1.0,2){CSTPSI speedup over STLPSI};
    \begin{scope}[shift={(3.45,3.78)},font=\scriptsize]
      \fill[white] (-0.08,0.2) rectangle (2.45,-0.66);
      \draw[black!20] (-0.08,0.2) rectangle (2.45,-0.66);
      \draw[cLem,thick](0,0)--(0.34,0);
        \fill[cLem](0.17,0.06)--(0.11,0)--(0.17,-0.06)--(0.23,0)--cycle;
        \node[right] at (0.40,0){64-B};
      \draw[cFunc,thick](1.35,0)--(1.69,0);
        \fill[cFunc](1.52,0.062)--(1.462,-0.045)--(1.578,-0.045)--cycle;
        \node[right] at (1.75,0){32-B};
      \draw[cCor,thick](0,-0.45)--(0.34,-0.45);
        \fill[cCor](0.12,-0.5) rectangle (0.22,-0.4);
        \node[right] at (0.40,-0.45){16-B};
      \draw[cThm,thick](1.35,-0.45)--(1.69,-0.45);
        \fill[cThm](1.52,-0.45) circle (0.055);
        \node[right] at (1.75,-0.45){23-bit};
    \end{scope}
  \end{tikzpicture}
  }
  \caption{Per-query online-time speedup of CSTPSI over STLPSI ($8$
    threads) versus database size $D$, for each label size (both axes
    log scale; the dashed line marks $1\times$, parity).}
  \label{fig:perf-speedup}
  \vspace{-1.2em}
\end{figure}
\subsection{Performance of CSTPSI}\label{sec:eval-perf}

Having established that the deployment configuration zeroes the
RSE, we show it does so without sacrificing performance.
Table~\ref{tab:headtohead} reports the head-to-head of CSTPSI against the
STLPSI baseline across database sizes, label sizes, and thread counts,
with the per-case speedup and communication saving.

\textbf{Scaling with database size.}
CSTPSI's edge over the baseline is largest on small and moderate
databases (Fig.~\ref{fig:perf-speedup}). Its homomorphic work per round
grows with $D$, so as the database grows this per-round cost catches up
with the one-time saving from running GC once. Through $D = 100$K CSTPSI is
faster at every label size. At $D = 1$M it is still faster at all but the
smallest label: with a 23-bit label the extra security round of
T=2 outweighs GC saving, and the two protocols run
within about $2\%$ of each other. CSTPSI thus matches or beats the baseline
at every operating point we measure, with no regime in which it is more
than marginally slower.

\textbf{Scaling with label size.}
STLPSI re-runs GC and retransmits the encrypted
query on every round, where they grow with the label. CSTPSI
runs GC once and sends the query once, so its advantage widens
as labels grow (the upper curves in Fig.~\ref{fig:perf-speedup}). The
effect peaks at $D = 1$K with a $64$-byte label, where CSTPSI is more
than $20\times$ faster than the baseline at $8$ threads (Fig.~\ref{fig:perf-speedup})
and sends about $93\%$ less data.

\textbf{Scaling with threads.}
Threads help most where homomorphic evaluation dominates, that is at
large databases. At $D = 100$K the speedup from $1$ to $8$ threads
reaches about $2.7\times$. At $D = 1$K the garbled circuit runs
single-threaded, leaving little to parallelize, so extra threads
change little.

\textbf{Communication.}
Per-query communication has two parts, the receiver's query upload
($R{\to}S$) and the sender's encrypted response ($S{\to}R$);
Table~\ref{tab:headtohead} and Fig.~\ref{fig:comm-saving} give the
per-case breakdown. The key effect is that caching the encrypted query
fixes $R{\to}S$ at a constant $2.7$\,MiB per query in CSTPSI,
independent of database and label size, whereas STLPSI re-sends it on
each of its $1 + K$ rounds; the response $S{\to}R$ instead scales with
$D$ in both protocols and dominates at scale. Caching therefore wins
decisively where the upload dominates, on small to moderate databases
with large labels (up to $93\%$ less data), and the advantage narrows or
reverses at the largest databases, where the extra $T = 2$ response
round makes CSTPSI send about $40\%$ more for a $23$-bit label at
$D = 1$M. These extra bytes are modest in absolute terms and are the
price of the security round whose benefit Section~\ref{sec:eval-far}
quantifies.

\textbf{Peak RAM.}
Peak resident memory is dominated by the sender, which holds the
partitioned database, and grows with $D$. It rises from well under
$1$\,GB at small databases to about $5.3$--$5.6$\,GB at $D = 100$K
($64$-byte) and $8.0$--$9.1$\,GB at $D = 1$M ($23$-bit, $T = 2$), a
modest overhead over the baseline; the receiver's footprint stays under
$1$\,GB throughout.

\begin{figure}[!h]
  \centering
  \vspace{-1.6em}
  \resizebox{0.92\columnwidth}{!}{
  \begin{tikzpicture}[x=1cm,y=1cm]
    \foreach \yy/\yl in {0/{$-50\%$},2.0/{$25\%$},2.667/{$50\%$},3.333/{$75\%$},4/{$100\%$}}{
      \draw[gray!18] (0,\yy)--(6,\yy); \node[font=\scriptsize,left] at (0,\yy){\yl};}
    \foreach \xx/\xl in {0/{1K},2/{10K},4/{100K},6/{1M}}{
      \draw[gray!12] (\xx,0)--(\xx,4); \node[font=\scriptsize,below] at (\xx,0){\xl};}
    \draw[black!60] (0,0) rectangle (6,4);
    \draw[black!55,dashed,thick] (0,1.333)--(6,1.333);
    \node[font=\scriptsize,fill=white,inner sep=1pt] at (1.1,1.13){$0\%$ parity};
    \draw[cLem,thick] (0,3.800)--(2,3.501)--(4,2.248)--(6,1.379);
    \foreach \x/\y in {0/3.800,2/3.501,4/2.248,6/1.379}{
      \fill[cLem] (\x,\y+0.06)--(\x-0.06,\y)--(\x,\y-0.06)--(\x+0.06,\y)--cycle;}
    \draw[cFunc,thick] (0,3.704)--(2,3.413)--(4,2.155)--(6,1.285);
    \foreach \x/\y in {0/3.704,2/3.413,4/2.155,6/1.285}{
      \fill[cFunc] (\x,\y+0.062)--(\x-0.058,\y-0.045)--(\x+0.058,\y-0.045)--cycle;}
    \draw[cCor,thick] (0,3.600)--(2,3.272)--(4,1.987)--(6,1.109);
    \foreach \x/\y in {0/3.600,2/3.272,4/1.987,6/1.109}{
      \fill[cCor] (\x-0.05,\y-0.05) rectangle (\x+0.05,\y+0.05);}
    \draw[cThm,thick] (0,2.667)--(2,2.213)--(4,0.952)--(6,0.173);
    \foreach \x/\y in {0/2.667,2/2.213,4/0.952,6/0.173}{
      \fill[cThm] (\x,\y) circle (0.055);}
    \node[font=\small,below=5mm] at (3,0){Database size $D$};
    \node[font=\small,rotate=90] at (-1.05,2){CSTPSI comm.\ saving over STLPSI};
    \begin{scope}[shift={(3.45,3.78)},font=\scriptsize]
      \fill[white] (-0.08,0.2) rectangle (2.45,-0.66);
      \draw[black!20] (-0.08,0.2) rectangle (2.45,-0.66);
      \draw[cLem,thick](0,0)--(0.34,0);
        \fill[cLem](0.17,0.06)--(0.11,0)--(0.17,-0.06)--(0.23,0)--cycle;
        \node[right] at (0.40,0){64-B};
      \draw[cFunc,thick](1.35,0)--(1.69,0);
        \fill[cFunc](1.52,0.062)--(1.462,-0.045)--(1.578,-0.045)--cycle;
        \node[right] at (1.75,0){32-B};
      \draw[cCor,thick](0,-0.45)--(0.34,-0.45);
        \fill[cCor](0.12,-0.5) rectangle (0.22,-0.4);
        \node[right] at (0.40,-0.45){16-B};
      \draw[cThm,thick](1.35,-0.45)--(1.69,-0.45);
        \fill[cThm](1.52,-0.45) circle (0.055);
        \node[right] at (1.75,-0.45){23-bit};
    \end{scope}
  \end{tikzpicture}
  }
  \caption{Communication saving of CSTPSI over STLPSI versus database
    size $D$, for each label size ($0\%$ is parity; negative values
    mean CSTPSI sends more).}
  \label{fig:comm-saving}
  \vspace{-1.4em}
\end{figure}

\subsection{Optimization Breakdown}\label{sec:eval-ablation}
We add CSTPSI's caching optimizations one at a time and report each step
as a factor on the per-query online time, with no absolute timings, so
the contribution reads independently of any single case. The base is
CSTPSI at $T = 2$ with no caching, which is exactly the baseline run at
$T = 2$. Relative to the original $T = 1$ baseline this base pays only
the security fix, one extra homomorphic round out of $T + K$, a tax of
about $10\%$ at the $23$-bit label and $3\%$ at the $64$-byte label. On
top of the base we add three caching steps in sequence
(Table~\ref{tab:ablation}): a single garbled circuit per session, with
the receiver still re-sending its encrypted query powers each round;
then the sender caching the expanded power window, so the receiver sends
it only once; then the receiver caching its reconstruction inverses.

\begin{table}[h]
\centering
\caption{Cumulative speedup of each caching step over the unoptimized
base (CSTPSI at $T{=}2$, no caching), per database size (factors).}
\label{tab:ablation}
\small
\setlength{\tabcolsep}{3pt}
\begin{tabular}{l cccc}
\toprule
& \multicolumn{4}{c}{Speedup over base at $D=$} \\
\cmidrule(lr){2-5}
Caching step (cumulative) & $1$K & $10$K & $100$K & $1$M \\
\midrule
$+$ once-per-session GC            & $21.44\times$ & $18.59\times$ & $7.91\times$ & $1.95\times$ \\
$+$ sender power-window cache       & $22.06\times$ & $19.06\times$ & $7.98\times$ & $1.95\times$ \\
$+$ receiver reconstruction cache  & $22.07\times$ & $19.13\times$ & $8.09\times$ & $1.99\times$ \\
\bottomrule
\end{tabular}
\vspace{-0.6em}
\end{table}

Every entry in Table~\ref{tab:ablation} is a speedup factor over the
no-caching base. The garbled-circuit step dominates. Evaluating it once
instead of on each of the $1 + K$ rounds removes a $K/(1+K)$ fraction of
the oblivious-PRF cost, from $50\%$ at the $23$-bit label to $96\%$ at
the $64$-byte label. Because that cost dominates the baseline at small
and moderate $D$, this single step alone recovers the security fix's tax
many times over. The sender power-window cache then uploads the
encrypted query powers once rather than on every round. The
reconstruction cache replaces a per-pair field inversion with a cached
lookup, which makes the receiver reconstruction about $2.5\times$
faster. That saving grows with $D$, from under a millisecond at $1$K to
about $0.8$ seconds at $1$M, yet it stays a small share of the online
time, which the homomorphic evaluation dominates. End to end, the
optimizations turn the added $T = 2$ round from a cost into a net
speedup over the baseline across the tested range.

\subsection{Artifact and Reproducibility}\label{sec:eval-artifact}

We provide our complete source together with single-command shell
scripts that regenerate every table and figure in this section from
fixed seeds, and an interactive tool that runs the protocol
end to end. The artifact and documentation are
publicly available at \url{https://github.com/euzun/cstpsi}.

\section{Discussion}\label{sec:discussion}
\textbf{Discussion.} CSTPSI's central property is that caching decouples the dominant online costs from the label size, so its advantage over the baseline widens as labels grow (\S\ref{sec:eval-perf}). One deployment nuance is worth noting. On a fast ($10$ Gbps) link the homomorphic evaluation, not transmission, dominates latency, so the caching primarily reduces communication volume rather than wall-clock time. Throughout, two token rounds keep the realization soundness error (RSE) below engineering thresholds for databases up to one million records.

\textbf{Generalization.} The composition gap is not specific to this kernel. Any realization that buys efficiency with a per-trial false accept and runs one trial per record inherits it, and the set-threshold kernel is simply where the gap separates cleanly from the matcher's FMR floor and can be measured (\S\ref{sec:threat}). The remedy generalizes as a design principle. When a matcher's soundness rests on a per-trial event whose mass scales with the workload, make the check a first-class, independent, in-protocol validator rather than an after-the-fact output filter, and size the number of validators to the deployment scale through a closed-form bound (Lemma~\ref{lem:far-bound}); soundness then composes with the workload instead of degrading with it, since a spurious accept must fool all $T$ validators at once. CSTPSI is the threshold-labeled fuzzy-PSI instance, and we state this as a measured observation about a class of protocols, not a proof of the general case.

\textbf{Limitations.} CSTPSI removes the kernel's RSE, but it does not change the matcher's intrinsic FMR. Two unrelated records can still share at least $k$ items by chance, and the expected number of such natural matches per query grows with the database size. The application controls this rate through the representation and the choice of $N$ and $k$, not through the protocol. CSTPSI therefore brings the deployed system down to this FMR floor, and amplification cannot go below it. Amplification is free on the false-reject side, because a true match shares the actual items and reconstructs in every round, so adding token rounds removes only spurious accepts. Reducing the FMR itself would instead need independent encodings across the rounds, which would raise the false-reject rate, and we leave this to future work. A second limit is communication. The sender-to-receiver response grows by one field element per round, so it scales with the label length, and the construction targets small-to-moderate labels rather than bulk payloads (at one field element of about $2.8$ bytes per round, a one-megabyte label would need roughly $350{,}000$ rounds). A third limit is the threshold, fixed at $k = 2$. A larger $k$ inflates the factor $\binom{N}{k}$ (for example, from $\binom{64}{3} = 41{,}664$ to $\binom{64}{5} \approx 7.6 \times 10^{6}$) and raises the single-round RSE, so it needs more token rounds for the same target. Seen positively, amplification is what makes a larger $k$ usable at all, because without it the single-round rate at $k = 3$ already approaches certainty on modest databases. Finally, our analysis is semi-honest, and malicious security is future work. Amplification raises the cost of cryptographic probing, where an adversary crafts queries to trigger field collisions, but it does not raise the cost of natural-match probing. Token and label rounds are independent (Proposition~\ref{prop:multi-round}) but are sent sequentially, so pipelining them for throughput is also future work.

\section{Conclusion}\label{sec:conclusion}

Standard per-trial analysis for \mbox{FLPSI} fails under realistic workloads, where a spurious accept is not a mere accuracy artifact but a soundness defect: the kernel accepts a query the plaintext matcher would reject, and returns a value it never would. We expose this workload dependence through a composable bound (Lemma~\ref{lem:far-bound}, Theorem~\ref{thm:multi-token}) and close the resulting gap with \mbox{CSTPSI}, whose cached query and garbled circuit keep its added security affordable: it holds the realization soundness error (RSE) within engineering thresholds at a million records with two token rounds, and the bound extends that guarantee to a billion records with a third. Our evaluation, on synthetic and real data, records no spurious accept under amplification.

\section{LLM Usage Statement}\label{sec:llm-usage}
The authors used an LLM solely for language editing, grammar correction, figure drafting assistance, table formatting, and minor presentation improvements. The LLM did not contribute any conceptual ideas, protocol designs, methodological innovations, experimental strategies, security analyses, or scientific insights. All intellectual and technical contributions originated entirely from the authors. Any material provided to the LLM for editing or formatting contained no sensitive, personal, confidential, or ethically restricted information.

\bibliographystyle{IEEEtran}
\bibliography{references}

@inproceedings{uzun2021fuzzy,
  title     = {Fuzzy Labeled Private Set Intersection with Applications to Private Real-Time Biometric Search},
  author    = {Uzun, Erkam and Chung, Simon P. and Kolesnikov, Vladimir and Boldyreva, Alexandra and Lee, Wenke},
  booktitle = {30th USENIX Security Symposium (USENIX Security 21)},
  pages     = {911--928},
  year      = {2021},
  publisher = {USENIX Association}
}

@article{shamir1979share,
  author    = {Shamir, Adi},
  title     = {How to Share a Secret},
  journal   = {Communications of the ACM},
  volume    = {22},
  number    = {11},
  pages     = {612--613},
  year      = {1979}
}

@inproceedings{chen2018labeled,
  author    = {Chen, Hao and Huang, Zhicong and Laine, Kim and Rindal, Peter},
  title     = {Labeled {PSI} from Fully Homomorphic Encryption with Malicious Security},
  booktitle = {Proceedings of the 2018 ACM SIGSAC Conference on Computer and Communications Security ({CCS})},
  pages     = {1223--1237},
  year      = {2018}
}

@inproceedings{Garimella:CRYPTO:2021,
  author    = {Garimella, Gayathri and Pinkas, Benny and Rosulek, Mike and Trieu, Ni and Yanai, Avishay},
  title     = {Oblivious Key-Value Stores and Amplification for Private Set Intersection},
  booktitle = {Advances in Cryptology -- {CRYPTO} 2021},
  series    = {Lecture Notes in Computer Science},
  volume    = {12828},
  pages     = {395--425},
  publisher = {Springer},
  year      = {2021},
  note      = {IACR ePrint 2021/883}
}

@inproceedings{cong2021labeled,
  author    = {Cong, Kelong and Moreno, Radames Cruz and da Gama, Mariana Botelho and Dai, Wei and Iliashenko, Ilia and Laine, Kim and Rosenberg, Michael},
  title     = {Labeled {PSI} from Homomorphic Encryption with Reduced Computation and Communication},
  booktitle = {Proceedings of the 2021 ACM SIGSAC Conference on Computer and Communications Security ({CCS})},
  pages     = {1135--1150},
  year      = {2021}
}

@inproceedings{ghosh2019threshold,
  author    = {Ghosh, Satrajit and Simkin, Mark},
  title     = {The Communication Complexity of Threshold Private Set Intersection},
  booktitle = {Advances in Cryptology - {CRYPTO} 2019},
  pages     = {3--29},
  year      = {2019}
}

@inproceedings{badrinarayanan2021multiparty,
  author    = {Badrinarayanan, Saikrishna and Miao, Peihan and Raghuraman, Srinivasan and Rindal, Peter},
  title     = {Multi-Party Threshold Private Set Intersection with Sublinear Communication},
  booktitle = {Public-Key Cryptography - {PKC} 2021},
  year      = {2021}
}

@misc{seal,
  title        = {{Microsoft SEAL (release 4.x)}},
  howpublished = {\url{https://github.com/microsoft/SEAL}},
  author       = {{Microsoft Research}},
  year         = {2023}
}

@inproceedings{aby,
  author    = {Demmler, Daniel and Schneider, Thomas and Zohner, Michael},
  title     = {{ABY}: A Framework for Efficient Mixed-Protocol Secure Two-Party Computation},
  booktitle = {22nd Annual Network and Distributed System Security Symposium ({NDSS})},
  year      = {2015}
}

@inproceedings{chandran2022circuit,
  author    = {Chandran, Nishanth and Gupta, Divya and Shah, Akash},
  title     = {Circuit-{PSI} with Linear Complexity via Relaxed Batch {OPPRF}},
  booktitle = {Proceedings on Privacy Enhancing Technologies ({PoPETs})},
  volume    = {2022},
  number    = {1},
  pages     = {353--372},
  year      = {2022}
}

@misc{wang2017emp,
  author       = {Wang, Xiao and Malozemoff, Alex J. and Katz, Jonathan},
  title        = {{EMP-toolkit}: Efficient MultiParty Computation Toolkit},
  howpublished = {\url{https://github.com/emp-toolkit}},
  year         = {2016}
}

@inproceedings{Pinkas:EUROCRYPT:2019,
  author    = {Pinkas, Benny and Schneider, Thomas and Tkachenko, Oleksandr and Yanai, Avishay},
  title     = {Efficient Circuit-Based {PSI} with Linear Communication},
  booktitle = {Advances in Cryptology -- {EUROCRYPT} 2019},
  series    = {Lecture Notes in Computer Science},
  volume    = {11478},
  pages     = {122--153},
  year      = {2019}
}

@misc{flint,
  title        = {{FLINT: Fast Library for Number Theory (release 3.4)}},
  howpublished = {\url{https://flintlib.org}},
  author       = {Hart, William and others},
  year         = {2024}
}

@misc{zeromq,
  title        = {{\O MQ (ZeroMQ): An open-source universal messaging library}},
  howpublished = {\url{https://zeromq.org}},
  year        = {2024}
}

@inproceedings{krssw2019,
  author    = {Kales, Daniel and Rechberger, Christian and Schneider, Thomas and Senker, Matthias and Weinert, Christian},
  title     = {Mobile Private Contact Discovery at Scale},
  booktitle = {28th {USENIX} Security Symposium ({USENIX} Security 19)},
  year      = {2019},
  publisher = {USENIX Association}
}

@inproceedings{rdfa2018,
  author    = {Resende, Amanda C. Davi and de Freitas Aranha, Diego},
  title     = {Faster Unbalanced Private Set Intersection},
  booktitle = {Financial Cryptography and Data Security ({FC} 2018)},
  series    = {Lecture Notes in Computer Science},
  year      = {2018},
  note      = {ePrint 2017/677},
  publisher = {Springer}
}

@inproceedings{cfr2023,
  author    = {Chakraborti, Anrin and Fanti, Giulia and Reiter, Michael K.},
  title     = {Distance-Aware Private Set Intersection},
  booktitle = {32nd {USENIX} Security Symposium ({USENIX} Security 23)},
  year      = {2023},
  publisher = {USENIX Association},
  note      = {arXiv:2112.14737}
}

@inproceedings{fnp2004,
  author    = {Michael J. Freedman and Kobbi Nissim and Benny Pinkas},
  title     = {Efficient Private Matching and Set Intersection},
  booktitle = {Advances in Cryptology -- {EUROCRYPT} 2004},
  series    = {Lecture Notes in Computer Science},
  volume    = {3027},
  pages     = {1--19},
  publisher = {Springer},
  year      = {2004}
}

@inproceedings{psz2014,
  author    = {Benny Pinkas and Thomas Schneider and Michael Zohner},
  title     = {Faster Private Set Intersection Based on {OT} Extension},
  booktitle = {23rd {USENIX} Security Symposium ({USENIX} Security 14)},
  pages     = {797--812},
  publisher = {{USENIX} Association},
  year      = {2014}
}

@inproceedings{kkrt2016,
  author    = {Vladimir Kolesnikov and Ranjit Kumaresan and Mike Rosulek and Ni Trieu},
  title     = {Efficient Batched Oblivious {PRF} with Applications to Private Set Intersection},
  booktitle = {Proceedings of the 2016 {ACM} {SIGSAC} Conference on Computer and Communications Security},
  pages     = {818--829},
  publisher = {{ACM}},
  year      = {2016}
}

@inproceedings{prty2020,
  author    = {Benny Pinkas and Mike Rosulek and Ni Trieu and Avishay Yanai},
  title     = {{PSI} from {PaXoS}: Fast, Malicious Private Set Intersection},
  booktitle = {Advances in Cryptology -- {EUROCRYPT} 2020},
  series    = {Lecture Notes in Computer Science},
  volume    = {12106},
  pages     = {739--767},
  publisher = {Springer},
  year      = {2020}
}

@inproceedings{rs2021,
  author    = {Peter Rindal and Phillipp Schoppmann},
  title     = {{VOLE-PSI}: Fast {OPRF} and Circuit-{PSI} from Vector-{OLE}},
  booktitle = {Advances in Cryptology -- {EUROCRYPT} 2021},
  series    = {Lecture Notes in Computer Science},
  volume    = {12697},
  pages     = {901--930},
  publisher = {Springer},
  year      = {2021}
}

@inproceedings{rr2022,
  author    = {Srinivasan Raghuraman and Peter Rindal},
  title     = {Blazing Fast {PSI} from Improved {OKVS} and Subfield {VOLE}},
  booktitle = {Proceedings of the 2022 {ACM} {SIGSAC} Conference on Computer and Communications Security},
  pages     = {2505--2517},
  publisher = {{ACM}},
  year      = {2022},
  note      = {ePrint 2022/320, \url{https://eprint.iacr.org/2022/320}}
}

@inproceedings{GRS22,
  author    = {Gayathri Garimella and Mike Rosulek and Jaspal Singh},
  title     = {Structure-Aware Private Set Intersection, with Applications to Fuzzy Matching},
  booktitle = {Advances in Cryptology -- {CRYPTO} 2022, Part {I}},
  series    = {Lecture Notes in Computer Science},
  volume    = {13507},
  pages     = {323--352},
  publisher = {Springer, Heidelberg},
  year      = {2022},
  note      = {ePrint 2022/1011, \url{https://eprint.iacr.org/2022/1011}}
}

@inproceedings{baarsenpu2024fuzzyhyperballs,
  author       = {Aron van Baarsen and Sihang Pu},
  title        = {Fuzzy Private Set Intersection with Large Hyperballs},
  booktitle    = {Advances in Cryptology -- {EUROCRYPT} 2024},
  editor       = {Marc Joye and Gregor Leander},
  series       = {Lecture Notes in Computer Science},
  volume       = {14655},
  pages        = {340--369},
  publisher    = {Springer},
  year         = {2024},
  doi          = {10.1007/978-3-031-58740-5_12},
  note         = {ePrint 2024/330, \url{https://eprint.iacr.org/2024/330}}
}

@inproceedings{gqllw2024fmap,
  author    = {Ying Gao and Lin Qi and Xiang Liu and Yuanchao Luo and Longxin Wang},
  title     = {Efficient Fuzzy Private Set Intersection from Fuzzy Mapping},
  booktitle = {Advances in Cryptology -- {ASIACRYPT} 2024},
  year      = {2024},
  note      = {Code: \url{https://github.com/ql70ql70/Fuzzy-Private-Set-Intersection-from-Fuzzy-Mapping}}
}

@inproceedings{pstkz2025darot,
  author    = {Lucas Piske and Jaspal Singh and Ni Trieu and Vladimir Kolesnikov and Vassilis Zikas},
  title     = {Distance-Aware {OT} with Application to Fuzzy {PSI}},
  booktitle = {Proceedings of the 2025 {ACM} {SIGSAC} Conference on Computer and Communications Security},
  series    = {{CCS} '25},
  pages     = {4679--4691},
  year      = {2025},
  publisher = {{ACM}},
  doi       = {10.1145/3719027.3744857},
  note      = {Full version: ePrint 2025/996}
}

@unpublished{zccbllww2025,
  author    = {Cong Zhang and Yu Chen and Yang Cao and Yujie Bai and Shuaishuai Li and Juntong Lin and Anyu Wang and Xiaoyun Wang},
  title     = {Fast Fuzzy {PSI} from Symmetric-Key Techniques},
  year      = {2025},
  note      = {IACR ePrint 2025/885, \url{https://eprint.iacr.org/2025/885}; venue not stated in manuscript}
}

@inproceedings{mwzl2026,
  author    = {Shengzhe Meng and Xiaodong Wang and Xv Zhou and Bei Liang},
  title     = {Unbalanced Fuzzy Private Set Intersection for {L}$_\infty$ Distance: Achieving Sublinear Communication with Large Set Size},
  booktitle = {35th USENIX Security Symposium (USENIX Security 26)},
  publisher = {USENIX Association},
  year      = {2026},
  note      = {Code: \url{https://doi.org/10.5281/zenodo.18223806}}
}

@inproceedings{yhwdwz2026fpsi,
  author    = {Xinpeng Yang and Meng Hao and Chenkai Weng and Robert H. Deng and Yonggang Wen and Tianwei Zhang},
  title     = {Efficient Fuzzy Private Set Intersection from Secret-shared {OPRF}},
  booktitle = {Proceedings of the 2026 {IEEE} Symposium on Security and Privacy ({S\&P})},
  year      = {2026},
  note      = {arXiv:2604.14909v1}
}

@inproceedings{boldyreva2014fuzzy,
  author    = {Boldyreva, Alexandra and Chenette, Nathan},
  title     = {Efficient Fuzzy Search on Encrypted Data},
  booktitle = {Fast Software Encryption ({FSE})},
  series    = {LNCS},
  volume    = {8540},
  pages     = {613--633},
  publisher = {Springer},
  year      = {2014}
}

@inproceedings{chen2020sanns,
  title     = {{SANNS}: Scaling Up Secure Approximate k-Nearest Neighbors Search},
  author    = {Chen, Hao and Chillotti, Ilaria and Dong, Yihe and Poburinnaya, Oxana and Razenshteyn, Ilya and Riazi, M. Sadegh},
  booktitle = {29th USENIX Security Symposium (USENIX Security 20)},
  pages     = {2111--2128},
  year      = {2020},
  publisher = {USENIX Association}
}

@article{bgv2014leveled,
  author    = {Brakerski, Zvika and Gentry, Craig and Vaikuntanathan, Vinod},
  title     = {{(Leveled)} Fully Homomorphic Encryption without Bootstrapping},
  journal   = {{ACM} Transactions on Computation Theory ({TOCT})},
  volume    = {6},
  number    = {3},
  pages     = {13:1--13:36},
  year      = {2014},
  note      = {Preliminary version: ITCS 2012}
}

@inproceedings{chen2017fast,
  author    = {Chen, Hao and Laine, Kim and Rindal, Peter},
  title     = {Fast Private Set Intersection from Homomorphic Encryption},
  booktitle = {Proceedings of the 2017 {ACM} {SIGSAC} Conference on Computer and Communications Security ({CCS})},
  pages     = {1243--1255},
  year      = {2017}
}

@inproceedings{fipr2005keyword,
  author    = {Freedman, Michael J. and Ishai, Yuval and Pinkas, Benny and Reingold, Omer},
  title     = {Keyword Search and Oblivious Pseudorandom Functions},
  booktitle = {Theory of Cryptography ({TCC})},
  series    = {LNCS},
  volume    = {3378},
  pages     = {303--324},
  publisher = {Springer},
  year      = {2005}
}

@inproceedings{psz2015phasing,
  author    = {Pinkas, Benny and Schneider, Thomas and Segev, Gil and Zohner, Michael},
  title     = {Phasing: Private Set Intersection Using Permutation-Based Hashing},
  booktitle = {24th {USENIX} Security Symposium ({USENIX} Security)},
  pages     = {515--530},
  year      = {2015}
}

@inproceedings{psww2018cuckoo,
  author    = {Pinkas, Benny and Schneider, Thomas and Weinert, Christian and Wieder, Udi},
  title     = {Efficient Circuit-Based {PSI} via Cuckoo Hashing},
  booktitle = {Advances in Cryptology -- {EUROCRYPT} 2018},
  series    = {LNCS},
  volume    = {10822},
  pages     = {125--157},
  publisher = {Springer},
  year      = {2018}
}

@article{smartvercauteren2014simd,
  author    = {Smart, Nigel P. and Vercauteren, Frederik},
  title     = {Fully Homomorphic {SIMD} Operations},
  journal   = {Designs, Codes and Cryptography},
  volume    = {71},
  number    = {1},
  pages     = {57--81},
  year      = {2014}
}

@inproceedings{ducasstehle2016sanitization,
  author    = {Ducas, L\'eo and Stehl\'e, Damien},
  title     = {Sanitization of {FHE} Ciphertexts},
  booktitle = {Advances in Cryptology -- {EUROCRYPT} 2016},
  series    = {LNCS},
  volume    = {9665},
  pages     = {294--310},
  publisher = {Springer},
  year      = {2016}
}

@misc{calapodescu2017compact,
  title={Compact fuzzy private matching using a fully-homomorphic encryption scheme},
  author={Calapodescu, Ioan and Estehghari, Saghar and Clier, Johan},
  year={2017},
  month=aug # "~29",
  publisher={Google Patents},
  note={US Patent 9,749,128}
}

@inproceedings{chmielewski2008fuzzy,
  title={Fuzzy private matching},
  author={Chmielewski, Lukasz and Hoepman, Jaap-Henk},
  booktitle={2008 Third International Conference on Availability, Reliability and Security},
  pages={327--334},
  year={2008},
  organization={IEEE}
}

@inproceedings{ye2009efficient,
  title={Efficient fuzzy matching and intersection on private datasets},
  author={Ye, Qingsong and Steinfeld, Ron and Pieprzyk, Josef and Wang, Huaxiong},
  booktitle={International Conference on Information Security and Cryptology},
  pages={211--228},
  year={2009},
  organization={Springer}
}

@inproceedings{barni2010privacy,
  title={Privacy-preserving fingercode authentication},
  author={Barni, Mauro and Bianchi, Tiziano and Catalano, Dario and Di Raimondo, Mario and Donida Labati, Ruggero and Failla, Pierluigi and Fiore, Dario and Lazzeretti, Riccardo and Piuri, Vincenzo and Scotti, Fabio and others},
  booktitle={Proceedings of the 12th ACM workshop on Multimedia and security},
  pages={231--240},
  year={2010}
}

@inproceedings{blanton2011secure,
  title={Secure and efficient protocols for iris and fingerprint identification},
  author={Blanton, Marina and Gasti, Paolo},
  booktitle={European Symposium on Research in Computer Security},
  pages={190--209},
  year={2011},
  organization={Springer}
}

@inproceedings{erkin2009privacy,
  title={Privacy-preserving face recognition},
  author={Erkin, Zekeriya and Franz, Martin and Guajardo, Jorge and Katzenbeisser, Stefan and Lagendijk, Inald and Toft, Tomas},
  booktitle={International symposium on privacy enhancing technologies symposium},
  pages={235--253},
  year={2009},
  organization={Springer}
}

@inproceedings{osadchy2010scifi,
  title={Scifi-a system for secure face identification},
  author={Osadchy, Margarita and Pinkas, Benny and Jarrous, Ayman and Moskovich, Boaz},
  booktitle={2010 IEEE Symposium on Security and Privacy},
  pages={239--254},
  year={2010},
  organization={IEEE}
}

@inproceedings{sadeghi2009efficient,
  title={Efficient privacy-preserving face recognition},
  author={Sadeghi, Ahmad-Reza and Schneider, Thomas and Wehrenberg, Immo},
  booktitle={International conference on information security and cryptology},
  pages={229--244},
  year={2009},
  organization={Springer}
}

@inproceedings{babenko2016efficient,
  title={Efficient Indexing of Billion-Scale Datasets of Deep Descriptors},
  author={Babenko, Artem and Lempitsky, Victor},
  booktitle={IEEE Conference on Computer Vision and Pattern Recognition (CVPR)},
  pages={2055--2063},
  year={2016}
}

@techreport{huang2008labeled,
  title={Labeled Faces in the Wild: A Database for Studying Face Recognition in Unconstrained Environments},
  author={Huang, Gary B. and Ramesh, Manu and Berg, Tamara and Learned-Miller, Erik},
  number={07-49},
  institution={University of Massachusetts, Amherst},
  year={2007}
}

@inproceedings{dang2025fpsi,
  author = {Dang, C. and Zhou, X. and Liang, B.},
  title = {Efficient Fuzzy {PSI} Based on Prefix Representation},
  booktitle = {Proceedings of the 2025 {ACM} {SIGSAC} Conference on Computer and Communications Security (CCS)},
  pages = {2204--2218},
  year = {2025},
  publisher = {{ACM}}
}

\appendices
\section{Query-Power Transmission Variants}\label{app:query-powers}

The sender's per-partition polynomial evaluation
$f_{t,i,p}(b_j) = \sum_{\ell=0}^{d-1} c_{t,i,p,\ell} \cdot
\mathrm{Enc}(b_j^\ell)$ with $d = s_{\mathit{part}}$ requires
ciphertexts $\mathrm{Enc}(b_j^\ell)$ for every $\ell \in [0, d)$.
Two variants populate this set; both are compatible with the
cross-round caching of \S\ref{sec:cstpsi-design}.

\subsection{Variant~A (full-receiver-side).}  The receiver knows
$b_j$ in plaintext (from the GC-OPRF of step~\ref{step:gc}),
computes $b_j^\ell$ for $\ell = 0, \dots, d-1$ modulo $F$,
encrypts each, and transmits the full set of $d$ ciphertexts.
Sender per-round work: $d$ \texttt{multiply\_plain} operations
at depth one.  No ciphertext-ciphertext multiplications.

\subsection{Variant~B (BSGS).}  The receiver transmits an
$O(\sqrt{d})$-element baby-step--giant-step sparse window
following APSI-style labeled \mbox{PSI}~\cite{cong2021labeled};
the sender expands to the full power set via $O(\sqrt{d})$
ciphertext-ciphertext multiplications plus relinearization at
session setup, consuming a small constant number of additional
\mbox{BFV} multiplicative levels.

\subsection{Trade-off.}  Variant~B saves $\sim\sqrt{d}\times$ in
communication (at $d = 32$, $\sim 0.24$~MiB vs $\sim 1.3$~MiB per
query-item bundle) but pays a session-setup tax: relinearization
in \mbox{SEAL} runs roughly $20{-}40\times$ a
\texttt{multiply\_plain} on Intel servers with AVX-512+AES-NI, and
disproportionately slower on hardware without those instructions
(ARM, AMD pre-Zen3 without AVX-512, embedded).  Variant~B's
extra depth also forces a longer \mbox{coeff\_modulus} chain,
paying a $\sim 1.3\times$ ciphertext-size overhead on every
subsequent operation.  Variant~B therefore dominates when the
network is the bottleneck and the hardware is Intel-AVX-tuned
(\mbox{WAN} deployments on modern servers); Variant~A dominates
under co-located benchmarks or non-Intel hardware.

\subsection{Reference artifact.}  The benchmarks of
\S\ref{sec:eval} co-locate sender and receiver over a
\mbox{ZeroMQ} loopback transport, where Variant~B's
communication savings are negligible while its cipher-cipher
session-setup tax remains.  The artifact therefore implements
Variant~A, aligning with the benchmarked regime.

\subsection{Caching contribution and follow-up.}  Independent of
the variant, \mbox{CSTPSI}'s cross-round caching of the power
bundle avoids $(T + K - 1)$ session-setup passes per query
session; under Variant~B this is $(T + K - 1) \cdot O(\sqrt{d})$
ciphertext-ciphertext multiplications avoided, on the order of
seconds at typical $K$ on the reference hardware.  A full
Variant~B implementation with a decoupled ablation flag and
cross-hardware measurement is queued for follow-up.

\section{Full Security Proof of \(\Pi_T\)}\label{app:security-proof}

This appendix gives the simulator constructions
\(\mathsf{Sim}_R, \mathsf{Sim}_S\) and the supporting propositions
that underpin the proof sketch of Theorem~\ref{thm:main-sec}.
Throughout, the real-world view \(\Real_{\Pi_T, P}\) and the
ideal-world view \(\Ideal_{\mathcal{F}_{\mathrm{CSTPSI}}, \mathsf{Sim}_P, P}\)
are as in Definition~\ref{def:cstpsi-security} of
\S\ref{sec:security}.  We proceed party by party: for each corrupted
party we describe the real view step by step (indicating which
steps of \(\Pi_T\) place messages into the view), construct the
simulator step by step (mirroring those messages), and argue
indistinguishability per step.  The substantive (soundness) term of
the simulator gap, the realization soundness error (RSE) composition
bound from Lemma~\ref{lem:far-bound} amplified by
Theorem~\ref{thm:multi-token}, falls out of the reconstruction step
on the receiver-side walk.  Two supporting propositions
(multi-round soundness and 1-GC transcript equivalence) are stated
at the end of the appendix and cited from inside the walks.

\subsection{Sender Privacy: Simulating the Receiver}\label{app:sec-sender}

\(\mathsf{Sim}_R\) takes as input \((\lambda, R, Y,
\mathcal{F}_{\mathrm{CSTPSI}}(Y, \mathit{Db}))\): the security
parameter, the corrupted party label, the receiver's query \(Y\),
and the ideal-world output (the multiset of matched labels).
\(R\)'s view consists of its input, output, randomness, and the
messages it receives; the simulator's task is to emulate those
messages consistently with the ideal output.

\paragraph{\textbf{Describing the receiver's view.}}  We walk through the steps of CSTPSI protocol \(\Pi_T\)
(Figure~\ref{fig:cstpsi-protocol}) to describe \(\View_R\).

Steps~\ref{step:init}--\ref{step:interp} are sender-internal: \(S\)
samples \(\kappa\), partitions, blinds with \(\mathrm{AES}_\kappa\),
constructs Shamir shares, and interpolates and SIMD-packs the
per-partition polynomials.  \(R\) receives no messages, so
\(\mathsf{Sim}_R\) does nothing to simulate them.

In Step~\ref{step:gc}, \(R\) and \(S\) run a single 1-GC AES
execution: \(R\) inputs \((y_1, \ldots, y_N)\), \(S\) inputs
\(\kappa\), and \(R\) obtains \(b_j = \mathrm{AES}_\kappa(y_j)
\bmod F\) (with the \(0 \mapsto 1\) convention) for every
\(j \in [N]\).  \(R\) receives GC messages whose transcript is
computationally indistinguishable from a simulated transcript by
the security of the underlying semi-honest GC
protocol~\cite{wang2017emp}.  The GC output
\(\{b_j\}_{j \in [N]}\) enters \(\View_R\) and must be emulated.
By Proposition~\ref{prop:1-gc}, the single 1-GC execution is
transcript-equivalent to the naive \((T+K)\)-GC composition, so
this analysis suffices for the entire session.

In Step~\ref{step:onlinequery}, \(R\) prepares the BSGS sparse
window \(\{\mathrm{Enc}(b_j^\ell) : \ell \in W\}\) under its own
BFV public key and transmits it once.  \(R\) sends here but does
not receive; \(S\)'s window expansion is server-internal and not
part of \(\View_R\).

In Step~\ref{step:hom}, for each of the \(T + K\) rounds \(R\)
receives BFV ciphertexts that decrypt to the per-partition
polynomial evaluations \(f_{t,i,p}(b_j)\) for every
\((i, j, p)\).  These plaintext reconstructions are the
informative messages \(\mathsf{Sim}_R\) must emulate, and they
are the locus of the RSE (soundness) term of the
simulator gap.

In Steps~\ref{step:reconstruct-token}--\ref{step:reconstruct-label},
\(R\) combines the decrypted evaluations into the per-pair
reconstructions \(\mathsf{KR}(i, j, t)\) (the two-point Lagrange
combination of Step~\ref{step:reconstruct-token}, instantiated at
\(k = 2\)) and outputs labels.
\(R\) receives no further messages; the work is local to its
view.

For sender privacy, \(R\) must learn nothing about non-matching
database rows beyond the ideal-world output.  The key observation
is that for any non-matching pair \((i, j)\), under the PRF
security of \(\mathrm{AES}_\kappa\) and the threshold-\(k = 2\)
property of Shamir secret sharing, the real
\(\mathsf{KR}(i, j, t)\) is uniform on \(\mathbb{F}_F\); the
sole exception is the RSE composition event where a non-match
collides to zero in all \(T\) token rounds, bounded by
Lemma~\ref{lem:far-bound}.  This exception is the soundness term
of the simulator gap, not a leak: on it the value the receiver
reconstructs is an off-curve field element independent of the
stored records (a two-point Lagrange value at \(x = 0\) with the
token channel constrained to \(0\) and the record channel free),
so even there no sender data is revealed and the defect is one of
soundness.

\paragraph{\textbf{Constructing the receiver's simulator.}}  We mirror the
real view step by step.

For Steps~\ref{step:init}--\ref{step:interp}, \(\mathsf{Sim}_R\)
does nothing (no messages to \(R\)).

For Step~\ref{step:gc}, \(\mathsf{Sim}_R\) invokes the
semi-honest GC simulator~\cite{wang2017emp} on receiver-side
input \(Y\) and a freshly sampled simulated output
\(\widetilde{b}_j \getsr \mathbb{F}_F\) for each \(j \in [N]\),
yielding a simulated GC transcript together with simulated
blinded items \(\{\widetilde{b}_j\}\) used downstream.

For Step~\ref{step:onlinequery}, \(\mathsf{Sim}_R\) generates the
BSGS power bundle exactly as \(R\) would on inputs
\(\{\widetilde{b}_j\}\) using fresh BFV encryption randomness.

For Step~\ref{step:hom}, \(\mathsf{Sim}_R\) populates the
decrypted evaluations \(\mathsf{KR}(i, j, t)\) using the
ideal-world output: for each unordered pair \(\{i, j\}\), each
partition \(p\), and each round \(t\),
\begin{itemize}
  \item if \(\mathcal{F}_{\mathrm{CSTPSI}}\) reports a match at
    \((p, \{i, j\})\), set \(\mathsf{KR}(i, j, t) = 0\) in
    every token round \(t \in [0, T)\) and to the corresponding
    Lagrange-encoded label chunk for every label round
    \(t \in [T, T + K)\), consistent with the reported label;
  \item otherwise sample \(\mathsf{KR}(i, j, t) \getsr
    \mathbb{F}_F\) independently across \(t\).
\end{itemize}
\(\mathsf{Sim}_R\) then constructs simulated BFV ciphertexts
encrypting plaintext polynomials whose batched coefficients
decrypt to these values under \(R\)'s public key, using fresh
encryption randomness; by Proposition~\ref{prop:multi-round} the
per-round ciphertexts can be simulated independently across the
\(T + K\) rounds.

For Steps~\ref{step:reconstruct-token}--\ref{step:reconstruct-label},
\(R\) does only local computation; \(\mathsf{Sim}_R\) runs the
same computation on the simulated values, which by construction
yields the ideal-world output.

\paragraph{\textbf{Indistinguishability}} We argue step by step to show the indistinguishability of \(\Real_{\Pi_T, R}\) and
\(\Ideal_{\mathcal{F}_{\mathrm{CSTPSI}}, \mathsf{Sim}_R, R}\).

\emph{Steps \ref{step:init}--\ref{step:interp}.}  \(R\) receives
nothing; the two views are identical.

\emph{Step \ref{step:gc}.}  The real GC transcript on inputs
\((Y, \kappa)\) is computationally indistinguishable from the
simulated transcript by the semi-honest GC
simulator~\cite{wang2017emp}.  The real outputs
\(b_j = \mathrm{AES}_\kappa(y_j) \bmod F\) are PRF-uniform on
\(\mathbb{F}_F\) under the AES PRF assumption, up to negligible
mod-\(F\) bias; the simulated \(\widetilde{b}_j\) are exactly
uniform.  The gap on this step is bounded by the GC simulator
advantage plus the AES PRF advantage of the single GC execution,
amortized across all rounds by Proposition~\ref{prop:1-gc}; the
lemma's PRF hybrid charges \(T \cdot
\mathsf{Adv}^{\mathrm{prf}}_{\mathrm{AES}}\) in total across the
\(T\) token rounds.

\emph{Step \ref{step:onlinequery}.}  \(R\) sends; the sent
ciphertexts in both worlds are generated from the (then-fixed)
blinded items under fresh BFV encryption randomness.
Conditioned on Step~\ref{step:gc} indistinguishability, the two
bundles are identically distributed up to BFV randomness.

\emph{Step \ref{step:hom}.}  The substantive gap.  We split by
pair type.

For matching pairs \((p, \{i, j\})\) consistent with the
ideal-world output, the real reconstructions
\(\mathsf{KR}(i, j, t)\) are determined by the Shamir scheme:
\(s_{t,e} = 0\) for token rounds and the label chunk for label
rounds.  These match \(\mathsf{Sim}_R\)'s planted values exactly.

For non-matching pairs \((p, \{i, j\})\), the real
\(\mathsf{KR}(i, j, t)\) at every round is uniform on
\(\mathbb{F}_F\) conditioned on \(R\)'s ideal-world output (by
the threshold-\(k = 2\) property of Shamir secret sharing
combined with PRF-uniformity of \(\{b_j\}\) from
Step~\ref{step:gc}); the simulated values are uniform by
construction.  We argue this uniformity marginally per
non-matching pair, which suffices for the semi-honest sketch in
the style of FLPSI'21~\cite{uzun2021fuzzy}; the within-partition
correlations among the \(\binom{N}{k}\) reconstructions that
share one low-degree interpolant are not needed for the bound and
are not exploited by the simulator.  The two distributions agree
on non-matches except
on the RSE composition event where the real reconstruction
collides to zero across all \(T\) token rounds at some
non-matching pair (mis-classified by \(R\) as a token candidate).
The probability of this event is at most
\(\binom{N}{k}\,n_{\mathit{part}} / F^T\): the per-trial
single-round collision probability \(1/F\) is driven down to
\(1/F^T\) over the \(T\) independent token rounds
(Theorem~\ref{thm:multi-token}), and this is then union-bounded
over the \(\binom{N}{k}\,n_{\mathit{part}}\) trials per query
(Lemma~\ref{lem:far-bound}).

The BFV ciphertexts carrying these values are computationally
indistinguishable from fresh encryptions by IND-CPA security of
BFV~\cite{seal}.  By Proposition~\ref{prop:multi-round}, the
ciphertext bundle is transmitted once and reused, so the
ciphertext-level gap is a single IND-CPA advantage rather than
\(T + K\) of them.

\emph{Steps \ref{step:reconstruct-token}--\ref{step:reconstruct-label}.}
Local computation on values that are now indistinguishable; the
outputs agree by construction.

Summing the per-step gaps, the receiver-side simulator gap is
\[
  \binom{N}{k}\,\frac{n_{\mathit{part}}}{F^{T}}
  \,+\, T \cdot \mathsf{Adv}^{\mathrm{prf}}_{\mathrm{AES}}(\mathcal{B})
  \,+\, \mathsf{negl}(\lambda),
\]
where the negligible term collects the GC simulator advantage and
the BFV IND-CPA advantage.  At the parameters of
Corollary~\ref{cor:sufficient-T}, the soundness term is below the
engineering RSE threshold for \(T \ge 2\) at million-scale
and below the cryptographic standard for \(T \ge 3\) up to
ten-million scale.  It reaches the cryptographic standard at
billion-scale only at \(T = 4\)
(Corollary~\ref{cor:sufficient-T}).

\subsection{Receiver Privacy: Simulating the Sender}\label{app:sec-receiver}

\(\mathsf{Sim}_S\) takes as input \((\lambda, S, \mathit{Db},
|Y|)\): the security parameter, the corrupted party label, the
sender's database, and the leakage \(|Y|\) (which equals \(N\)
and is public).  \(\mathcal{F}_{\mathrm{CSTPSI}}\) returns
\(\bot\) to \(S\), so no ideal-world output enters the simulator.

\paragraph{\textbf{Describing the sender's view.}}  We walk through the steps of CSTPSI protocol \(\Pi_T\)
(Figure~\ref{fig:cstpsi-protocol}) to describe \(\View_S\).

Steps~\ref{step:init}--\ref{step:interp} are sender-internal:
\(S\) computes on its own inputs and randomness; no messages
arrive.  \(\mathsf{Sim}_S\) follows the protocol honestly on its
own input \(\mathit{Db}\); no simulation work is required.

In Step~\ref{step:gc}, \(S\) participates in the 1-GC AES
execution with input \(\kappa\) and output \(\bot\).  \(S\)
receives GC messages whose transcript is computationally
indistinguishable from a simulated transcript by the semi-honest
GC simulator~\cite{wang2017emp}.  By Proposition~\ref{prop:1-gc},
the single GC transcript is identical in distribution to the
marginal of \(T + K\) independent GC executions on the same
inputs, so this analysis covers the full session.

In Step~\ref{step:onlinequery}, \(S\) receives the receiver's
encrypted BSGS sparse window
\(\{\mathrm{Enc}(b_j^\ell)\}_{j,\,\ell \in W}\): \(|W| \cdot N\)
BFV ciphertexts under \(R\)'s public key.  This is the only
ciphertext message \(S\) receives in the entire session.  \(S\)'s
subsequent window expansion and per-round homomorphic
evaluations are server-internal: \(S\) produces the response
ciphertexts but does not receive them, so they are not part of
\(\View_S\).  By Proposition~\ref{prop:multi-round}, the
receiver's ciphertext bundle is transmitted once and reused
across the \(T + K\) rounds; the sender-side ciphertext-level gap
is therefore a single IND-CPA advantage rather than \(T + K\) of
them.

In Steps~\ref{step:hom} and
\ref{step:reconstruct-token}--\ref{step:reconstruct-label}, \(S\)
either acts internally (Step~\ref{step:hom}) or is not involved
(reconstruction is receiver-local).  \(\View_S\) gains no new
messages.

\paragraph{\textbf{Constructing the sender's simulator.}}  We again mirror
the real view step by step.

For Steps~\ref{step:init}--\ref{step:interp}, \(\mathsf{Sim}_S\)
runs the protocol honestly on the sender's own input
\(\mathit{Db}\); the resulting offline state is exactly what
\(S\) would compute in the real world.

For Step~\ref{step:gc}, \(\mathsf{Sim}_S\) invokes the
semi-honest GC simulator~\cite{wang2017emp} on sender-side input
\(\kappa\) and output \(\bot\), producing a simulated GC
transcript.

For Step~\ref{step:onlinequery}, \(\mathsf{Sim}_S\) samples
\(|W| \cdot N\) random plaintexts in \(\mathbb{F}_F\) of the
appropriate batched length and encrypts them under fresh BFV
randomness with \(R\)'s public key, producing a simulated query
bundle.

For Steps~\ref{step:hom} and
\ref{step:reconstruct-token}--\ref{step:reconstruct-label},
\(\mathsf{Sim}_S\) does nothing additional: no messages are
received by \(S\) in these steps.

\paragraph{\textbf{Indistinguishability}} We argue step by step to show the indistinguishability of \(\Real_{\Pi_T, S}\) and
\(\Ideal_{\mathcal{F}_{\mathrm{CSTPSI}}, \mathsf{Sim}_S, S}\).

\emph{Steps \ref{step:init}--\ref{step:interp}.}  Honest execution
on the sender's own input matches the real view exactly.

\emph{Step \ref{step:gc}.}  Computational indistinguishability of
the GC transcript follows from the semi-honest GC simulator
guarantee~\cite{wang2017emp}.  By Proposition~\ref{prop:1-gc},
this covers the full \((T + K)\)-round session at the cost of
one GC simulator invocation.

\emph{Step \ref{step:onlinequery}.}  Any distinguisher between
the real query-power bundle and the simulated random-plaintext
bundle yields a BFV IND-CPA distinguisher of the same
advantage~\cite{seal}, since both bundles consist of BFV
ciphertexts under \(R\)'s public key and differ only in their
plaintexts.  By Proposition~\ref{prop:multi-round}, the bundle is
transmitted once, so a single IND-CPA reduction suffices for the
entire session.

\emph{Steps \ref{step:hom} and
\ref{step:reconstruct-token}--\ref{step:reconstruct-label}.}  No
messages enter \(\View_S\); the two views agree trivially.

The sender-side simulator gap reduces to the BFV IND-CPA
advantage, the semi-honest GC simulator advantage, and the
information-theoretic threshold property of Shamir secret
sharing~\cite{shamir1979share}, summing to \(\mathsf{negl}(\lambda)\).

\subsection{Supporting Propositions}\label{app:sec-rounds}

The two propositions below are invoked from inside both
step-walks above.

\begin{proposition}[Multi-round security]\label{prop:multi-round}
The transcript of the \(T + K\) homomorphic-evaluation rounds in
\(\Pi_T\) leaks no more, up to a negligible advantage, than what
a sender or receiver would learn from \(T + K\) independent
semi-honest executions of the single-token-round configuration
\(\Pi_1\).
\end{proposition}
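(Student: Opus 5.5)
The plan is a hybrid argument that rewrites the real $(T+K)$-round transcript of $\Pi_T$, one round at a time, into $T+K$ independently simulated single-round transcripts of $\Pi_1$. The argument rests on two structural facts about Figure~\ref{fig:cstpsi-protocol}.

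\begin{itemize}
\item The only cross-round shared state is the OPRF output $\{b_j\}$, produced once in step~\ref{step:gc}, and the encrypted power bundle, produced once in step~\ref{step:onlinequery}. Both are deterministic in $(Y,\kappa)$ plus one draw of encryption randomness.
\item All per-round randomness is the fresh Shamir coefficients $a_{t,e}$ from step~\ref{step:share}. These are sampled independently across $t$, under the CSPRNG assumption made explicit in Theorem~\ref{thm:multi-token}.
\end{itemize}

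I would fix $(Y,\mathit{Db},\kappa)$ and the shared state, and then show that the round-$t$ messages are conditionally independent across $t$. Round $t$'s response is obtained from the shared bundle by plaintext--ciphertext inner products with coefficients $c_{t,i,p,\ell}$. These coefficients are a deterministic function of the round-$t$ shares $P_{t,e}(i)$ and the fixed $b_{e,i}$. Noise flooding and modulus switching are then applied with fresh randomness.

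For the \textbf{sender's view}, no per-round message reaches $S$ at all. The view is the single GC transcript and the single query bundle. This is exactly the marginal of one $\Pi_1$ execution restricted to the messages $S$ receives, and it is trivially no larger than the view from $T+K$ executions. The claim there therefore follows from the identity of these distributions, with one GC-simulator and one IND-CPA term.

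For the \textbf{receiver's view}, I would proceed in three hybrids.

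\begin{enumerate}
\item $H_0$ is the real view. In $H_1$, each round's returned ciphertexts are replaced by fresh encryptions of the same decrypted slot values. The noise-flooding (circuit-privacy) guarantee of~\cite{ducasstehle2016sanitization} makes these indistinguishable up to $\mathsf{negl}(\lambda)$ per round, hence $(T+K)\cdot\mathsf{negl}(\lambda)$ in total.
\item The decrypted values $f_{t,i,p}(b_j)$ are now functions of $(b,\{P_{t,e}\}_e)$ alone. Because the $P_{t,e}$ are independent across $t$, the joint distribution of the $T+K$ rounds factors as a product of per-round distributions conditioned on $b$.
\item Each factor is distributed exactly as the corresponding round of a $\Pi_1$ execution whose secret is $0$ (token round) or the relevant label chunk (label round), run with the same $b$.
\end{enumerate}

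Whatever $T+K$ independent $\Pi_1$ executions sharing $b$ leak thus upper-bounds what $\Pi_T$ leaks, up to the accumulated negligible terms.

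The main obstacle is the shared $b$. Independent executions of $\Pi_1$ would each use a fresh OPRF key, so the leakage comparison is not literally to independent executions. I would handle this in one of two ways. The first option is to compare to independent executions \emph{given the same OPRF output}. The $b_j$ are AES-pseudorandom and are already in the receiver's view after the single GC, so reusing them adds no information beyond one copy. This is where Proposition~\ref{prop:1-gc} would be invoked, and it costs one $\mathsf{Adv}^{\mathrm{prf}}_{\mathrm{AES}}$ term. The second option is to argue directly that reusing $b$ correlates the rounds only through the evaluation points. Conditioned on those points, the round-$t$ ordinates are fresh uniform values (as in the proof sketch of Lemma~\ref{lem:far-bound}) at non-node positions, and deterministic share values at node positions, which $\Pi_1$ reveals per round anyway. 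I would try the first option, and check carefully that the label-round secrets, being chunks of one label, do not create cross-round correlation beyond what the chunked functionality output already reveals.
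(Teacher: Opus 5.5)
Your proposal is correct and follows essentially the same route as the paper's argument. The sender sees only the single GC transcript and the once-transmitted query bundle, which costs one IND-CPA term. The receiver's view factors into per-round views because the Shamir coefficients $a_{t,e}$ are sampled independently across $t$. Your noise-flooding hybrid for the returned ciphertexts and your explicit conditioning on the shared OPRF output make rigorous what the paper leaves implicit, since its factorization likewise holds only conditionally on $\kappa$ and $\mathit{Db}$. Note that this conditioning is needed mainly because it fixes the sender-side nodes $b_{e,i}$ across rounds, not the receiver's $b_j$.
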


\begin{proof}[Argument]
The receiver's encrypted query powers are transmitted exactly
once at the start of the online phase
(Step~\ref{step:onlinequery}) and reused as ciphertext operands
in every subsequent round (Step~\ref{step:hom}).  Their inclusion
in the sender's view contributes only to the first round;
subsequent rounds' use of the same ciphertext bundle adds nothing
under the BFV IND-CPA assumption applied to the single
transmission.  The Shamir polynomials \(\{P_{t,e}\}_t\) for
distinct \(t\) (Step~\ref{step:share}) are sampled with
independent fresh coefficients
(Theorem~\ref{thm:multi-token}, conditional on a cryptographically secure pseudorandom number generator (CSPRNG) as the randomness source);
the joint distribution of \(\{f_{t,i,p}\}_t\) therefore factors
as the product of per-round distributions, and the receiver's
view across rounds factors correspondingly into independent
single-round views.  Sender-side, no per-round ciphertext is
added to \(\View_S\), so no additional simulation gap arises.
\qed
\end{proof}

\begin{proposition}[1-GC transcript identity]\label{prop:1-gc}
The transcript of the single GC execution in
Step~\ref{step:gc} of \(\Pi_T\) is identical in distribution to
the transcript of \(T + K\) independent GC executions on the
same input \((y_1, \ldots, y_N)\) and key \(\kappa\).  Hence the
optimization is transcript-equivalent to the naive
\((T + K)\)-GC composition and inherits its semi-honest security
without additional reductions.
\end{proposition}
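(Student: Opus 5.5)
The plan is to make the statement precise before proving it, since ``identical in distribution'' cannot hold literally: a single transcript and a $(T+K)$-tuple of transcripts live in different spaces. I will read the proposition as two claims.
\begin{enumerate}
\item[(a)] The single execution of Step~\ref{step:gc} has the same distribution as any one coordinate (the marginal) of the naive composition.
\item[(b)] The naive $(T+K)$-fold view can be produced from the single view, together with the corrupted party's input and output, up to a negligible gap. So whatever the naive composition reveals, the optimized protocol reveals no more.
\end{enumerate}
This matches how Proposition~\ref{prop:1-gc} is used in Appendix~\ref{app:sec-sender} and Appendix~\ref{app:sec-receiver}: one GC simulator invocation should cover the whole session.

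\textbf{Step 1: the GC output does not depend on the round.} The AES-128 GC functionality maps $(Y,\kappa)$ to $b_j=\mathrm{AES}_\kappa(y_j)\bmod F$, with the $0\mapsto 1$ convention, for the receiver and to $\bot$ for the sender. This map is deterministic and has no round index $t$ among its inputs. In the naive composition, every one of the $T+K$ executions therefore has the same inputs $(Y,\kappa)$ and the same outputs. The executions differ only in their fresh garbling and oblivious-transfer randomness.

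\textbf{Step 2: marginal equality.} Each naive execution is an honest run of the same semi-honest Yao protocol~\cite{wang2017emp}, on identical inputs and with independent randomness. The $T+K$ transcripts are therefore i.i.d. conditioned on $(Y,\kappa)$. Each marginal has exactly the distribution of the single execution in Step~\ref{step:gc}, which gives claim (a) with equality.

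\textbf{Step 3: the extra copies add nothing (claim (b)).} Given the single transcript, I would build the remaining $T+K-1$ copies with the standard semi-honest GC simulator $\mathsf{Sim}_{GC}$. I feed it the corrupted party's input and its output: $(Y,\{b_j\})$ for $R$, or $(\kappa,\bot)$ for $S$. Both are available in the single-GC world because the output is round-independent (Step 1). A hybrid argument over the $T+K-1$ simulated copies bounds the distance from the real naive view by $(T+K-1)\cdot\mathsf{Adv}_{GC}$, which is $\mathsf{negl}(\lambda)$.

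For the corrupted sender, the argument simplifies. $S$ holds $\kappa$ and can simply re-run honest garbling itself, so the extra copies are generated with exactly the right distribution and the reduction is perfect. Conversely, the optimized view is a projection of the naive view onto one coordinate. Any distinguisher against the optimized protocol is therefore also a distinguisher against the naive protocol. Hence security transfers, and no new reduction beyond the single GC simulator is needed.

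\textbf{Main obstacle.} The hard part is reconciling Step 3 with the word ``identical'' in the statement. For the receiver, the extra copies can only be produced computationally, via $\mathsf{Sim}_{GC}$, because $R$ lacks $\kappa$. I expect to have to weaken the conclusion to ``identical marginals, and joint views computationally equivalent up to $(T+K)\mathsf{Adv}_{GC}$''. This weaker form is still absorbed into the $\mathsf{negl}(\lambda)$ term of Theorem~\ref{thm:main-sec}.

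If this proves awkward to state, a cleaner alternative is available: drop the comparison with the naive composition altogether. One can prove directly that the optimized protocol's GC step is simulable with a single call to $\mathsf{Sim}_{GC}$. That is all the simulators in Appendix~\ref{app:security-proof} actually require.
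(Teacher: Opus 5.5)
Your argument is correct and is essentially the paper's: the AES output is deterministic in $(Y,\kappa)$ and carries no round index. So the single execution is distributed exactly as one coordinate (a marginal) of the naive composition and is covered by one call to the standard semi-honest GC simulator; your remark that the full $(T+K)$-tuple can be recovered from the single view only computationally, via $T+K-1$ simulator hybrids, is a fair sharpening of the paper's looser ``identical in distribution'' wording. One small slip does not affect the conclusion: your sender-side shortcut is not perfect, because each GC execution also puts $R$'s oblivious-transfer messages (which depend on $Y$) into $S$'s view, so $S$ cannot regenerate the extra copies from $\kappa$ alone and still needs the simulator---which your generic hybrid already supplies.
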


\begin{proof}[Argument]
The blinded query items \(b_j = \mathrm{AES}_\kappa(y_j) \bmod F\)
with the \(0 \mapsto 1\) remap are deterministic in
\((y_1, \ldots, y_N, \kappa)\); evaluating the same circuit on
the same inputs yields identical outputs whether executed once
or \(T + K\) times.  The single-execution transcript is therefore
a marginal of the multi-execution distribution and is simulated
by the standard semi-honest GC simulator.
\qed
\end{proof}

\section{Protocol Walkthrough}\label{app:protocol-diagram}

Figure~\ref{fig:cstpsi-sequence} gives a message-sequence view of
\mbox{CSTPSI} ($\Pi_T$, Fig.~\ref{fig:cstpsi-protocol}) across the
database, sender, and receiver.

\begin{figure*}[p]
  \centering
  \includegraphics[width=\textwidth,height=0.92\textheight,keepaspectratio]%
    {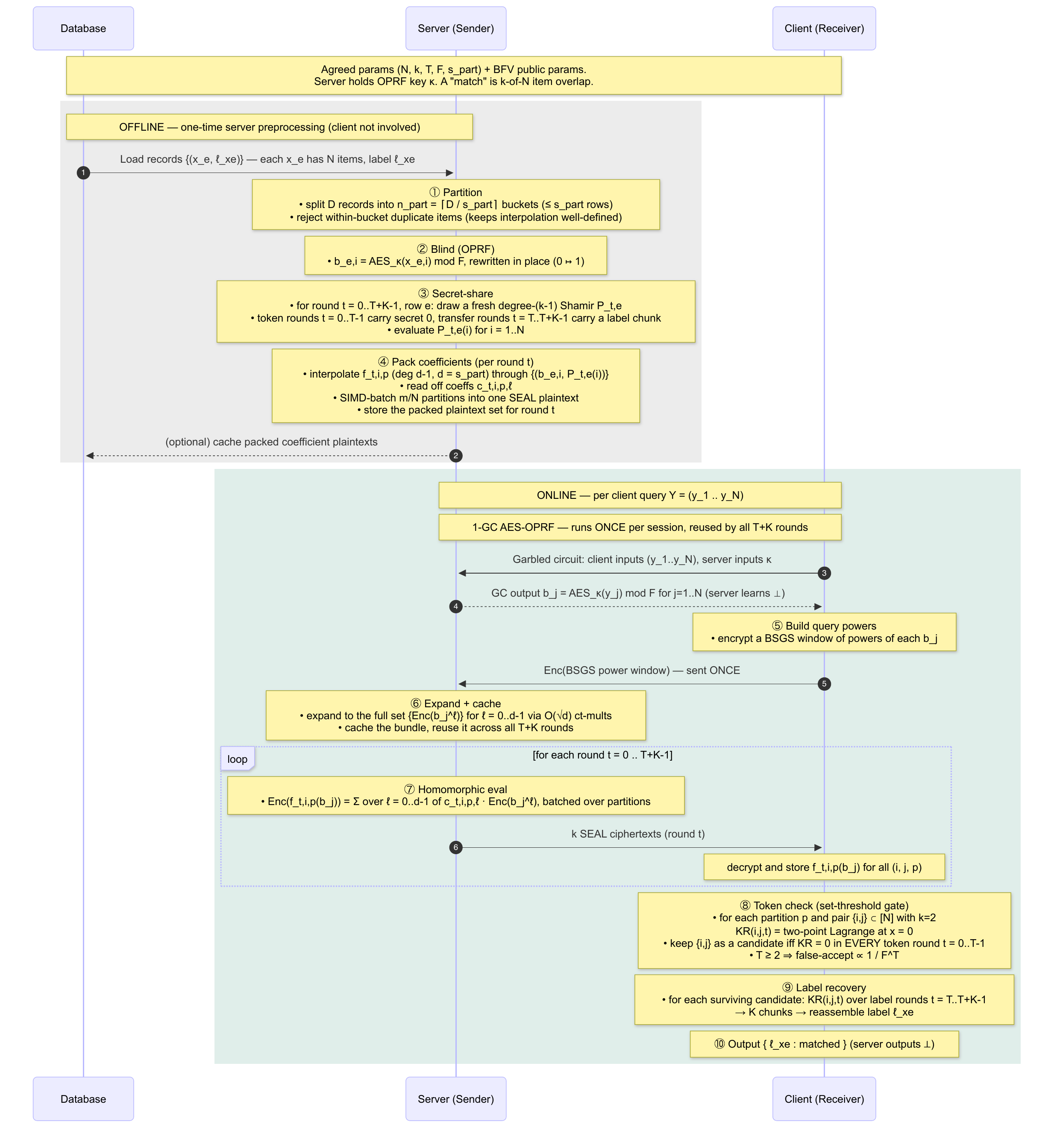}
  \caption{Message-sequence walkthrough of \mbox{CSTPSI} ($\Pi_T$,
    Fig.~\ref{fig:cstpsi-protocol}) with all optimizations enabled and
    $T\ge2$. \textbf{Offline} (grey) is sender-only and one-time:
    partition, OPRF-blind, secret-share over $T+K$ rounds, and pack the
    interpolated coefficients into SIMD-batched \mbox{SEAL} plaintexts.
    \textbf{Online} (teal) amortizes a single garbled circuit and a
    single cached query-power transmission across all $T+K$ rounds; only
    the per-round homomorphic evaluation repeats. A pair is admitted
    only if it reconstructs the token in \emph{every} one of the $T$
    token rounds, yielding the $\propto 1/F^{T}$ realization soundness
    error (RSE) bound of Lemma~\ref{lem:far-bound}.}
  \label{fig:cstpsi-sequence}
\end{figure*}

\end{document}